\pgfplotsset{compat=1.16}
\def\E#1{\mathbb{E}\left[#1\right]}
\def\V#1{\mathbb{V}\left[#1\right]}
\def\P#1{\mathbb{P}\left[#1\right]}
\def\C#1{\mathbb{C}\mathrm{ov}\left[#1\right]}
\newcommand{\argmin}{\mathop{\rm arg\,min}\limits}
\newcommand{\indep}{\perp\!\!\!\!\perp} 
\numberwithin{equation}{section}
\newtheorem{assumption}{Assumption}
\newenvironment{lemma}
  {\pushQED{\qed}\lemmax}
  {\popQED\endlemmax}
\newenvironment{prop}
  {\pushQED{\qed}\propx}
  {\popQED\endpropx}
\theoremstyle{definition}
\renewenvironment{proof}[1][\proofname]{%
  \par\pushQED{\qed}\normalfont%
  \topsep6\p@\@plus6\p@\relax
  \trivlist\item[\hskip\labelsep\bfseries#1\@addpunct{.}]%
  \ignorespaces
}{%
  \popQED\endtrivlist\@endpefalse
}
\renewcommand*{\@fnsymbol}[1]{\ensuremath{\ifcase#1\or \flat\or * \else\@ctrerr\fi}}
\title{\bf Joint Inference for the Regression Discontinuity Effect and Its External Validity}
\author{Yuta Okamoto\thanks{
\href{mailto:yuta.okamoto1998@outlook.com}{yuta.okamoto1998@outlook.com}}}
\affil{Graduate School of Economics, Kyoto University}
\begin{document}
\maketitle
\begin{abstract}
    The external validity of regression discontinuity designs is crucial for informing policy but is rarely examined in applied work.
    To advance empirical practice, we propose a joint inference procedure for the treatment effect and its local external validity, captured by the treatment effect derivative (TED), within a robust bias correction framework. 
    We further introduce a locally linear treatment effects assumption, which extends the scope of the TED and enables identification and the construction of a uniform confidence band for extrapolated effects. These methods apply to most empirical studies. Empirical illustrations demonstrate their practical usefulness.
\end{abstract}

\vspace{0.2cm}

{\textbf{Keywords:} Extrapolation, regression discontinuity designs, robust bias correction}

\vspace{0.2cm}

{\textbf{JEL Classification:} C12, C14, C21}

\newpage
\doublespacing
\section{Introduction}\label{sec: introduction}
In causal analysis, internal validity and external validity are two central concerns. 
Internal validity is a necessary condition for drawing causal conclusions, but external validity is equally essential for informing policy. Thus, internally valid causal analysis alone is not sufficient, as emphasized in \citet[Chapter 8]{Duflo_etal:2007} and \cite{Vivalt:2020}. 
Nevertheless, external validity is often overlooked, as \cite{Peters_etal:2016} report for randomized controlled trials, suggesting that empirical studies rarely address this issue explicitly.

Regression discontinuity (RD) designs, one of the most popular and credible empirical strategies, face a similar, if not more severe, challenge. In the RD framework, treatment status changes discontinuously at a cutoff point, enabling identification of the treatment effect at the threshold under a mild smoothness assumption \citep{Hahn_etal:2001}. This attractive feature, however, comes at the cost of uncertain external validity. Standard RD designs provide no information about treatment effects away from the cutoff, thereby limiting the generalizability of the RD estimates regardless of their statistical significance.
For example, Figure \ref{fig: ex} illustrates two scenarios in which the RD effects at the cutoff are identical, yet their policy implications differ. In Figure \ref{fig: ex1}, the RD effect plausibly extends beyond the cutoff, whereas in Figure \ref{fig: ex2} it may be questionable whether one can conclude with confidence that the treatment has a positive effect for everyone. 
Thus, focusing solely on the treatment effect at the cutoff often leaves the policy implications unclear.
\begin{figure}[t]
    \centering
    \begin{subfigure}[b]{0.45\textwidth}
        \centering
        \includegraphics[width=0.8\linewidth]{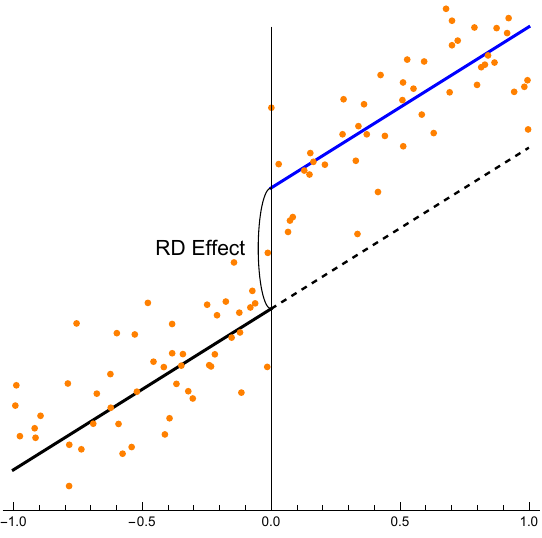}
        \caption{}
        \label{fig: ex1}
    \end{subfigure}
    \hfill
    \begin{subfigure}[b]{0.45\textwidth}
        \centering
        \includegraphics[width=0.8\linewidth]{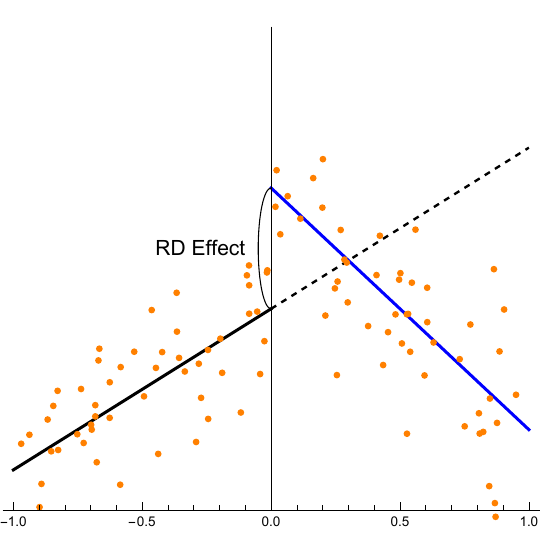}
        \caption{}
        \label{fig: ex2}
    \end{subfigure}
    \caption{Same RD Effect, Different Policy Implication}
    \label{fig: ex}
\end{figure}

Despite this crucial limitation, we find only 26 papers that discuss external validity among 70 empirical RD studies published in leading economic journals between 2020 and 2024 (details are provided in Section \ref{sec: survey}). 
Furthermore, only four studies cite recent theoretical proposals for extrapolation. 
These results may suggest not only that researchers devote limited attention to external validity, but also that extrapolation strategies proposed in theoretical econometrics are not necessarily well aligned with every empirical context, because such proposals typically require stronger assumptions, informative covariates, or design features such as multiple cutoff points \citep{Angrist_Rokkanen:2015jasa, Bertanha_Imbens:2020jbes, Cattaneo_etal:2021JASA_extrapolating, Deaner_Kwon:2025, Okamoto_Ozaki:2025}. 
This observation motivates us to develop a simple, practically accessible econometric tool.  

To this end, this article develops a simple joint inference procedure by incorporating the local extrapolation method of \cite{Dong_Lewbel:2015}---reviewed in the following paragraph---into the widely used robust bias correction (RBC) methods of \cite{Calonico_etal:2014}.
The \citeauthor{Dong_Lewbel:2015} approach is particularly attractive because it does not require additional covariates or design features and imposes only a weak differentiability assumption; in other words, it can be applied to most empirical settings.
Embedding it in the RBC framework ensures compatibility with standard practice: our confidence region is a natural extension of methods already familiar to empirical researchers, remains accessible, and requires no changes to the research design (including bandwidth choice).

To make our proposal concrete, we begin by introducing the \citeauthor{Dong_Lewbel:2015} strategy.
Let $\tau(x)$ denote the treatment effect function with $x=0$ as the cutoff. 
Whereas most RD studies focus on identifying and inferring $\tau(0)$, \cite{Dong_Lewbel:2015} proposes examining $\tau^{\prime}(0)$, since this derivative contains information about the local extrapolatability of the RD effect $\tau(0)$. 
For instance, if $\tau^{\prime}(0)=0$, the estimated treatment effect plausibly extends beyond the cutoff, as illustrated in Figure \ref{fig: ex1}. 
By contrast, if $\tau^{\prime}(0)<0$, the treatment effect may decline at other values of the running variable, raising concerns about the generalizability of the RD effect, as shown in Figure \ref{fig: ex2}.

In this respect, it is evident that the policy implications of RD studies become clearer when both $\tau(0)$ and $\tau^{\prime}(0)$ are reported.
Unfortunately, however, none of the surveyed studies report the derivative.
With the aim of encouraging broader adoption of the \citeauthor{Dong_Lewbel:2015} strategy, this article offers two complementary contributions. The first is a joint inference procedure for the pair $(\tau(0), \tau^{\prime}(0))$, obtained by deriving their joint asymptotic distribution under the RBC framework of \cite{Calonico_etal:2014}.
This RBC-based confidence region should be familiar and accessible to applied researchers and is therefore well-positioned for widespread adoption in empirical work.

Although it is an essential first step, the local external validity indicator $\tau^{\prime}(0)$ may not provide clear-cut policy implications, particularly when it is non-zero, compared with a more comprehensive extrapolation analysis such as \cite{Cattaneo_etal:2021JASA_extrapolating}. 
For instance, what does $\tau^\prime(0)=0.5$ imply about treatment effects away from the cutoff point? 
Our second contribution is therefore to introduce a \textit{locally linear treatment effects assumption}, which complements and extends the scope of \citeauthor{Dong_Lewbel:2015}'s treatment effect derivative $\tau^{\prime}(0)$ by enabling more direct extrapolation analysis of RD effects.
Under this additional assumption, RD effects away from the cutoff are characterized by the linear function, $\tau(x) = \tau(0) + \tau^{\prime}(0)x$.
This locally linear assumption and resulting identification are \textit{more general} than simply extending the standard RD effect, which effectively restricts $\tau^{\prime}(0)$ to be zero.
Moreover, the limit distribution used to derive the joint confidence region can also be employed to construct a uniform confidence band for the extrapolated RD effects, thereby enabling researchers to conduct a more informative extrapolation analysis that goes beyond merely assessing external validity.
Furthermore, we show that the confidence band remains valid for any subregion where the local linearity assumption holds, even if it does not hold over the entire initial interval.
That is, even when readers (or policymakers) disagree with researchers about the region over which the assumption is plausible, they can still perform a valid inference using the reported confidence band for any subinterval.
Thanks to this property, our approach does not raise concerns about specification search or ad-hoc tuning, thereby preserving the hallmark transparency of RD designs.

Our proposed approaches require no additional covariates or design features, making them applicable to virtually all RD settings and thereby enhancing the policy relevance of many empirical studies. 
\textcolor{black}{We illustrate their usefulness with two empirical applications. In the first, we find evidence of external validity and homogenous treatment effects even away from the cutoff. In the second, local external validity is rejected, suggesting that the standard RD effect should be interpreted with caution; moreover, our extrapolation analysis reveals treatment-effect heterogeneity over the running variable.}

\subsection*{Related Literature}
RD designs are widely acknowledged as credible quasi-experimental tools, both empirically and theoretically.
Empirically, \cite{Brodeur_etal:2020} finds that RD papers are less prone to $p$-hacking and publication bias. \cite{Hyytinen_etal:2018} provides empirical evidence showing that the RBC procedure of \cite{Calonico_etal:2014} yields inference results that are consistent with experimental estimates at the cutoff. Both works suggest strong internal validity of RD studies.
Theoretically, a growing body of research has proposed methods to enhance the internal validity of RD studies.
A very short list of recent contributions includes the sharp validity test for fuzzy RD designs by \cite{Arai_etal:2022}, the manipulation-robust bounding approach of \cite{Gerard_etal:2020}, the unified falsification test of \cite{Fusejima_etal:2025}, the analysis of measurement error by \cite{Dong_Kolesar:2023}, and adjustments for discrete running variables proposed by \cite{Dong:2015} and \cite{Kolear_Rothe:2018}.
See \cite{Cattaneo_Titiunik:2022} for a more comprehensive literature review.

In contrast, how carefully external validity is assessed in empirical RD studies remains not fully understood.
Theoretically as well, the literature is still developing, as highlighted by \cite{Abadie_Cattaneo:2018}: ``While there is some very recent work that addresses the issue of extrapolation of RD treatment effects [...], more work is certainly needed, covering both methodological and empirical issues."
Recent contributions, other than \cite{Dong_Lewbel:2015}, include conditional-independence-type extrapolation methods proposed by \cite{Angrist_Rokkanen:2015jasa} and \cite{Bertanha_Imbens:2020jbes}, \textcolor{black}{as well as approaches that exploit additional covariates or multiple cutoff points, such as \cite{Cattaneo_etal:2016jop}, \cite{Cattaneo_etal:2021JASA_extrapolating}, \cite{Deaner_Kwon:2025}, \cite{Okamoto_Ozaki:2025}, and \cite{Sun:2023}. 
Relatedly, \cite{Bertanha:2020} develops a method for estimating average treatment effects under counterfactual treatment-assignment and dose policies in settings with many cutoff points.}
Furthermore, some studies investigate treatment effect heterogeneity with respect to predetermined covariates (\citealp{Hsu_Shen:2019, Hsu_Shen:2021}).
\textcolor{black}{
The present article contributes to this second line of literature by providing suggestive survey evidence on recent empirical practice and developing theoretical results for external validity analysis and extrapolation exercises, integrating the seminal yet distinct contributions of \citet{Calonico_etal:2014} and \citet{Dong_Lewbel:2015}. 
Our locally linear treatment effects assumption has been used in \citet{Dong_Lewbel:2015} and \citet{Cerulli_etal:2017} without being explicitly stated and is therefore not entirely novel; our contribution is to enhance its practical applicability by providing a uniform inference procedure.
}

\subsection*{Plan of the Article}
Section \ref{sec: survey} reviews recent empirical studies to examine the current state of internal and external validity analysis in RD designs.
Section \ref{sec: sharp} presents the main results of the paper, focusing on the sharp RD design. After introducing the setup and notation, Section \ref{subsec: inference} develops a robust local linear confidence region, and Section \ref{subsec: linear} proposes an extrapolation method.
Section \ref{sec: empirical} reports empirical applications to highlight the potential usefulness of our proposed procedure.
We also provide simulation studies in Section \ref{sec: simulation} to investigate the finite sample properties of our proposed procedures.
Finally, Section \ref{sec:conclusion} concludes the paper with practical recommendations. All proofs, along with the generalization to local polynomial confidence regions and the extension to the fuzzy RD design, are provided in the Online Appendix.

\section{Empirical Practices in Recent RD Studies}\label{sec: survey}
\subsection{Sample and Procedure}
We begin by investigating how recent RD studies address issues of internal and external validity.
We review empirical studies published between 2020 and 2024 in the following leading economics journals: \textit{American Economic Journal: Applied Economics} (AEJ: Applied), \textit{American Economic Journal: Economic Policy} (AEJ: EP), \textit{American Economic Review} (AER), \textit{American Economic Review: Insights} (AERI), \textit{Journal of Human Resources} (JHR), and \textit{Journal of Labor Economics} (JOLE).
We identified articles containing the phrase ``regression discontinuity" using the EBSCOhost database. After excluding theoretical contributions, we obtain a sample of 70 empirical studies.

To systematically assess whether these studies discuss the external and internal validity of their RD design, and to avoid subjective judgment, we apply the following criteria.  
External validity is identified by (i) the presence of keywords such as 
``external validity," ``extrapolat*," or ``transportability" (where the asterisk (*) denotes a wildcard that allows for variations of the root word, e.g., \textit{extrapolate}, \textit{extrapolation}), and (ii) citations to leading contributions including \cite{Angrist_Rokkanen:2015jasa}, 
\cite{Bertanha_Imbens:2020jbes}, 
\cite{Cattaneo_etal:2021JASA_extrapolating}, and 
\cite{Dong_Lewbel:2015}.  
Internal validity is assessed using a parallel method. However, discussions of internal validity seldom appear under the explicit phrase ``internal validity." Therefore, in addition to this phrase, we also search for related keywords---such as ``continuity," ``manipulat*," and ``placebo"---that suggest a discussion of identification assumptions. We further record citations to \cite{Calonico_etal:2014} and \cite{McCrary:2008}, which provide valid inference procedures and a diagnostic test that are essential for internally valid RD analysis.  
We apply these criteria by using text analysis in R \citep{R}, which allows us to reproducibly detect whether each article includes the specified keywords or citations.\footnote{\textcolor{black}{We identify discussions of internal and external validity using fixed keywords and citations, which may under-detect conceptual discussions phrased differently or without the designated citations. Our counts should thus perhaps be interpreted as a conservative lower bound on explicit discussion.}}

\subsection{Results}
The survey results are summarized in Table \ref{tab: empirical practice}.
We find that almost all RD studies carefully address the internal validity of their analysis.
By contrast, far fewer articles explicitly discuss external validity.
In total, only 26 out of 70 articles contain the keywords related to external validity, and only four cite the leading contributions in this area.
\begin{table}
    \begin{center}
    \begin{tabular}{lcccccc}
    \hline\hline
        \multirow{2}{*}{Journal} & \multirow{2}{*}{\# of Articles} & \multicolumn{2}{c}{Internal Validity}  &  \multicolumn{2}{c}{External Validity} & \multirow{2}{*}{TED}\\\cline{3-4} \cline{5-6}
         & &  Keywords & Citation & Keywords & Citation & \\\hline
        AEJ: Applied & 16 & 16 & 13 & 7 & 1 & 0\\
        AEJ: EP & 16 & 15 & 15 & 4 & 1 & 0\\
        AER/AERI & 10 & 8 & 8 & 4 & 0 & 0\\
        JHR & 22 & 22 & 15 & 9 & 1 & 0\\
        JOLE & 6 & 6 & 6 & 2 & 1 & 0\\\hline
        Total & 70 & 67 & 57 & 26 & 4 & 0\\\hline
    \end{tabular}
    \caption{Reporting of Internal and External Validity in Empirical RD Studies}
    \label{tab: empirical practice}
    \end{center}

    \footnotesize
    \renewcommand{\baselineskip}{11pt}
    \textbf{Note:} The table reports, for each journal, the number of empirical RD studies published between 2020 and 2024 (column 2). 
    Columns 3-4 show the number of articles that mention at least one of the internal validity keywords or cite at least one of the designated references (\citealp{Calonico_etal:2014, McCrary:2008}). 
    Columns 5-6 report the corresponding counts for external validity, based on the presence of keywords or citations to leading contributions 
    (\citealp{Angrist_Rokkanen:2015jasa, Bertanha_Imbens:2020jbes, 
    Cattaneo_etal:2021JASA_extrapolating, Dong_Lewbel:2015}). 
    The last column reports the number of studies that explicitly present 
    the treatment effect derivative (TED) following \cite{Dong_Lewbel:2015}.
\end{table}

This pattern may suggest two possible implications.
First, the relatively small number of citations compared to keyword mentions indicates that most existing proposals are not necessarily directly applicable to many empirical RD settings, so that discussions of external validity are often conducted on a case-by-case basis.
Second, compared to internal validity, researchers devote relatively little attention to external validity. This is evident from the fact that the procedure of \cite{Dong_Lewbel:2015} has not been widely adopted in empirical practice, despite its broad applicability.

This issue is of particular concern. Given the inherently local nature of RD designs, the generalizability of the estimated effect is often unclear.
To improve the clarity of their policy implications, it would be desirable to have a simple tool that naturally extends methods routinely used in empirical research. In the next section, we propose such a procedure.

\section{External Validity and Extrapolation under Sharp RD}\label{sec: sharp}
To advance empirical practice, this section proposes a simple inference procedure that builds on \cite{Dong_Lewbel:2015}.
To encourage broader adoption of their strategy, we embed it in the widely used RBC confidence intervals of \cite{Calonico_etal:2014}.
This integration ensures close compatibility with standard practice, making the procedure both accessible to applied researchers and implementable without changes to the research design, including bandwidth choice.

Extending this idea, we also introduce an additional layer of assumptions that enables researchers to conduct a more comprehensive extrapolation analysis, rather than only assessing external validity. 
The proposed strategy again does not require additional covariates or design features, and is therefore applicable to a wide range of RD studies. 

\subsection{Setup}
We focus on the sharp RD design, while the extension to the fuzzy RD case is provided in the Online Appendix. 
We observe a random sample ${(Y_i(0), Y_i(1), X_i): i=1,\ldots,n}$.
Here, $Y_i(1)$ denotes the potential outcome under treatment, and $Y_i(0)$ the potential outcome under control.
$X_i$ is the running variable with density $f(x)$.
Unit $i$ receives the treatment if $X_i \geq 0$ and is untreated otherwise.
Hence, the observed outcome is $Y_i = Y_i(0)\mathbf{1}\left\{X_i < 0\right\} + Y_i(1)\mathbf{1}\left\{X_i \geq 0\right\}$.

Define the treatment effect function as $\tau(x) \coloneqq \E{Y_i(1) - Y_i(0) | X_i=x}$.
Then, the sharp RD effect is given by $\tau_{\mathtt{SRD}} \coloneqq \tau(0)$.
To assess a local external validity, \cite{Dong_Lewbel:2015} introduces the sharp RD treatment effect derivative (TED), which is given by $\tau_{\mathtt{SRD}}^\prime \coloneqq \tau^\prime(0)$. 
Under a weak differentiability assumption, these are identified as 
\begin{align*}
    \tau_{\mathtt{SRD}} &= \mu_+ - \mu_-,\quad
    \mu_+ \coloneqq \lim_{x\downarrow0}\mu(x),\quad
    \mu_- \coloneqq \lim_{x\uparrow0}\mu(x),\\
    \tau_{\mathtt{SRD}}^\prime &= \mu_+^\prime - \mu_-^\prime,\quad
    \mu_+^\prime \coloneqq \lim_{x\downarrow0}\mu^\prime(x),\quad
    \mu_-^\prime \coloneqq \lim_{x\uparrow0}\mu^\prime(x),
\end{align*}
where $\mu(x)\coloneqq \E{Y_i | X_i=x}$ \citep{Dong_Lewbel:2015}.
They also show that both $\tau_{\mathtt{SRD}}$ and $\tau_{\mathtt{SRD}}^\prime$ can be estimated from standard local linear regression, but their analysis does not extend to a joint inference procedure for the pair $(\tau_{\mathtt{SRD}}, \tau_{\mathtt{SRD}}^\prime)$.
Our first goal is to perform valid inference on $\tau_{\mathtt{SRD}}$ and $\tau_{\mathtt{SRD}}^\prime$ jointly.

\subsection{Robust Local Polynomial Confidence Region}\label{subsec: inference}
We begin by introducing maintained assumptions, which are fairly general and standard in the literature (\citealp{Calonico_etal:2014}):
\begin{assumption}\label{assumption: regularity}
    In a neighbourhood $(-\kappa_0,\kappa_0)$ around the cutoff:
    \item[(i)] $\E{Y_i^4 | X_i =x}$ is bounded, and $f(x)$ is continuous and bounded away from zero.
    \item[(ii)] $\mu_+(x)\coloneqq\E{Y_i(1)|X_i=x}$ and $\mu_-(x)\coloneqq\E{Y_i(0)|X_i=x}$ are $S$-times continuously differentiable.
    \item[(iii)] $\sigma_+^2(x)\coloneqq \V{Y_i(1)|X_i=x}$ and $\sigma_-^2(x)\coloneqq \V{Y_i(0)|X_i=x}$ are continuous and bounded away from zero.
\end{assumption}
\begin{assumption}\label{assumption: kernel}
    The kernel function $K$ is a symmetric, nonnegative, bounded, and continuous second-order kernel with compact support on $[-1,1]$.
\end{assumption}

Following the RD literature (\citealp{Calonico_etal:2014, Hahn_etal:2001}), we employ the local linear regression, while a more general treatment is given in the Online Appendix, Section S2.1.
Specifically, we estimate $\hat{\tau}_{\mathtt{SRD}} = \hat{\mu}_+ - \hat{\mu}_-$ and $\hat{\tau}_{\mathtt{SRD}}^\prime = \hat{\mu}_+^\prime - \hat{\mu}_-^\prime$, where
\begin{align*}
    \left(\hat{\mu}_+ , \hat{\mu}_+^\prime\right)^\top
    = \argmin_{(b_0,b_1)^\top \in\mathbb{R}^2} \sum_{i=1}^{n}\mathbf{1}\left\{X_i \geq 0\right\} \left(Y_i - b_0 - b_1 X_i\right)^2 \frac{1}{h}K\left(\frac{X_i}{h}\right),\\
    \left(\hat{\mu}_- , \hat{\mu}_-^\prime\right)^\top
    = \argmin_{(b_0,b_1)^\top \in\mathbb{R}^2} \sum_{i=1}^{n}\mathbf{1}\left\{X_i < 0\right\} \left(Y_i - b_0 - b_1 X_i\right)^2 \frac{1}{h}K\left(\frac{X_i}{h}\right).
\end{align*}
%We here note that \cite{Calonico_etal:2014} propose the use of local \textit{quadratic} regression for estimating the first derivative in the context of regression kink designs. In a similar vein, one could estimate derivatives by running a \textit{separate} local quadratic regression.
%However, we prefer \textit{simultaneous} estimation of the mean and first-derivative functions via local linear regression, for several reasons.
%First, it aligns with \cite{Dong_Lewbel:2015}, which proposes a simultaneous estimation using local linear regression.
%Second, at the boundary point, the variance inflation by choosing higher-order polynomials is severe compared to interior points, so the finite sample performance may not improve by increasing the polynomial order (e.g., \citealp{Gelman_Imbens:2019} and \citealp{Pei_etal:2022}).
%Finally, and most importantly, the leading biases of $\hat{\mu}_+$ and $\hat{\mu}_+^\prime$ depend on the same quantity $\mu_+^{(2)}$ when simultaneous smoothing is employed (e.g., \citealp[Theorem 3.3]{Fan_Gijbels:1996}). This feature is attractive because only a single quantity needs to be estimated for bias correction.
\textcolor{black}{
We here note that the \textit{simultaneous} estimation of the mean and first-derivative functions is attractive because the leading biases of $\hat{\mu}_+$ and $\hat{\mu}_+^\prime$ depend on the same quantity $\mu_+^{(2)}$ (e.g., \citealp[Theorem 3.3]{Fan_Gijbels:1996}). Hence, only a single quantity needs to be estimated for bias correction.\footnote{\textcolor{black}{Since we also care about the first derivative, a local quadratic fit ($p=2$) may be appealing under additional smoothness assumptions. A single local quadratic regression then yields both the mean and first-derivative estimators, whose leading biases depend on the third derivative (estimable, e.g., via a local cubic fit). While appealing for asymptotic bias, higher-order polynomials can inflate variance at boundary points, so finite-sample performance need not improve (see, e.g., \citealp{Gelman_Imbens:2019}, \citealp{Pei_etal:2022}, \citealp[Remark 4]{Ruppert_Wand:1994}). Moreover, bias correction becomes more challenging because it requires estimating higher-order derivatives. We investigate finite-sample properties in Section \ref{sec: simulation}.}}
}

The conditional biases of these local linear estimators are given by
\begin{align*}
    \E{\hat{\tau}_{\mathtt{SRD}}|X_1,\ldots,X_n} - {\tau}_{\mathtt{SRD}} &=
    h^2\mathrm{B}_{\mathtt{SRD}, 0} \left(\mu_+^{(2)} - \mu_-^{(2)}\right)\left\{1+o_p(1)\right\},\\
    \E{\hat{\tau}_{\mathtt{SRD}}^\prime|X_1,\ldots,X_n} - {\tau}_{\mathtt{SRD}}^\prime &=
    h \mathrm{B}_{\mathtt{SRD},1} \left(\mu_+^{(2)} - \mu_-^{(2)}\right)\left\{1+o_p(1)\right\},
\end{align*}
where $\mathrm{B}_{\mathtt{SRD}, 0}$ and $\mathrm{B}_{\mathtt{SRD},1}$ are constants depending only on observables.
We can estimate the unknown components, $\mu_+^{(2)}$ and $\mu_-^{(2)}$, using the local quadratic estimation with the same kernel $K$ and a possibly different bandwidth $b$.
With these local quadratic estimators, we can estimate the leading bias terms by $h^2\mathrm{B}_{\mathtt{SRD}, 0} (\hat{\mu}_+^{(2)} - \hat{\mu}_-^{(2)})$ and $h\mathrm{B}_{\mathtt{SRD},1} (\hat{\mu}_+^{(2)} - \hat{\mu}_-^{(2)})$, respectively.
Then, our bias-corrected estimators for $({\tau}_{\mathtt{SRD}},{\tau}_{\mathtt{SRD}}^\prime)$ are defined as
\begin{align*}
    \tilde{\tau}_{\mathtt{SRD}} \coloneqq \hat{\tau}_{\mathtt{SRD}} - h^2\mathrm{B}_{\mathtt{SRD}, 0} \left(\hat{\mu}_+^{(2)} - \hat{\mu}_-^{(2)}\right),\,\,\text{ and }\,\,
    \tilde{\tau}_{\mathtt{SRD}}^\prime \coloneqq \hat{\tau}_{\mathtt{SRD}}^\prime - h\mathrm{B}_{\mathtt{SRD},1} \left(\hat{\mu}_+^{(2)} - \hat{\mu}_-^{(2)}\right).
\end{align*}
The following lemma derives the joint limit of these two estimators:
\begin{lemma}\label{lemma: distribution sharp}
    Suppose Assumptions \ref{assumption: regularity}-\ref{assumption: kernel} hold. If $S\geq3$, $\max\{h,b\}<\kappa_0$, $n \min\{h^5,b^5\}\times\max\{h^2,b^2\}\to0$, $n\min\{h,b\}\to\infty$, then 
    \begin{align*}
        \Omega^{-1/2}\mathrm{diag}(1,h)\tilde{\Delta}(\tau_{\mathtt{SRD}}, \tau_{\mathtt{SRD}}^\prime)\to_d
        \mathcal{N}\left((0,0)^\top, \mathrm{diag}(1,1)
        \right),
    \end{align*}
    provided that $\Omega$ is invertible, where
    \begin{align*}
        \tilde{\Delta}(t, t^\prime)\coloneqq\begin{pmatrix}
                \tilde{\tau}_{\mathtt{SRD}} - t\\
                \tilde{\tau}_{\mathtt{SRD}}^\prime - t^\prime
            \end{pmatrix},\,\,
        \Omega\coloneqq\begin{pmatrix}
            \mathrm{V}_{\mathtt{SRD}} & \mathrm{C}_{\mathtt{SRD}}\\
            \mathrm{C}_{\mathtt{SRD}} & \mathrm{V}_{\mathtt{SRD}}^\prime
        \end{pmatrix},
    \end{align*}
    and $\mathrm{V}_{\mathtt{SRD}}$, $\mathrm{V}_{\mathtt{SRD}}^\prime$, and $\mathrm{C}_{\mathtt{SRD}}$ are provided in the Online Appendix Section S2.1.
    Furthermore, $\Omega$ is asymptotically invertible.
\end{lemma}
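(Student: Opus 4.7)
The plan is to follow the robust bias-correction strategy of \cite{Calonico_etal:2014}: first, express the bias-corrected pair $(\tilde{\tau}_{\mathtt{SRD}}, \tilde{\tau}_{\mathtt{SRD}}^\prime)$ as an explicit linear functional of the outcome vector; second, decompose the error into a conditional-bias piece plus a weighted sum of mean-zero residuals; third, show the bias is negligible under the stated bandwidth conditions; and finally, establish joint asymptotic normality of the stochastic piece via the Cram\'er--Wold device combined with Lyapunov's CLT.

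For the representation, the closed-form weighted-least-squares solutions give $(\hat{\mu}_\pm,\hat{\mu}_\pm^\prime)$ as $\Gamma_{n,\pm}(h)^{-1} n^{-1}\sum_i \mathbf{1}\{\pm X_i > 0\}\mathbf{R}_i K_{h,i} Y_i$, where $\mathbf{R}_i=(1,X_i)^\top$, $K_{h,i}=h^{-1}K(X_i/h)$, and $\Gamma_{n,\pm}(h)$ is the usual normalized Gram matrix; an analogous local-quadratic formula with bandwidth $b$ yields $\hat{\mu}_\pm^{(2)}$. Substituting into $\tilde{\tau}_{\mathtt{SRD}}$ and $\tilde{\tau}_{\mathtt{SRD}}^\prime$, and writing $Y_i=\mu_\pm(X_i)+\epsilon_i$ with $\epsilon_i\coloneqq Y_i-\E{Y_i|X_i}$, I can split $\mathrm{diag}(1,h)\tilde{\Delta}(\tau_{\mathtt{SRD}},\tau_{\mathtt{SRD}}^\prime) = \mathbf{B}_n + \mathbf{S}_n$, where $\mathbf{B}_n$ is deterministic given $\{X_i\}_{i=1}^n$ and $\mathbf{S}_n=\sum_{i=1}^n \mathbf{w}_{n,i}\epsilon_i$ for explicit weight vectors $\mathbf{w}_{n,i}\in\mathbb{R}^2$. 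A Taylor expansion of $\mu_\pm$ to order $S\geq 3$ shows that the $O(h^2)$ and $O(h)$ leading biases of the uncorrected estimators exactly match the correction terms, so the residual bias is of order $O_p(h^3+h^2 b)$ in both coordinates (after the $h$-rescaling); the condition $n\min\{h^5,b^5\}\max\{h^2,b^2\}\to 0$ is precisely what is needed to ensure $\Omega^{-1/2}\mathbf{B}_n = o_p(1)$.

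For the stochastic piece, I would apply the Cram\'er--Wold device: for an arbitrary $\mathbf{a}\in\mathbb{R}^2$, $\mathbf{a}^\top\mathbf{S}_n$ is, conditional on $(X_1,\ldots,X_n)$, a sum of independent mean-zero terms. Each summand is of magnitude $O((nh)^{-1})$, and the fourth-moment bound in Assumption \ref{assumption: regularity}(i) combined with the compactly supported kernel makes Lyapunov's condition immediate. The conditional variance converges in probability to $\mathbf{a}^\top\Omega\mathbf{a}$ via a law-of-large-numbers argument for kernel moments, performed separately on each side of the cutoff; the two sides simply add because the indicators $\mathbf{1}\{X_i\geq 0\}$ and $\mathbf{1}\{X_i<0\}$ have disjoint supports. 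Slutsky's theorem then delivers the conclusion.

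The main technical obstacle is correctly identifying the off-diagonal entry $\mathrm{C}_{\mathtt{SRD}}$ of $\Omega$: the single bias-correction factor $\hat{\mu}_+^{(2)}-\hat{\mu}_-^{(2)}$ appears in both $\tilde{\tau}_{\mathtt{SRD}}$ and $\tilde{\tau}_{\mathtt{SRD}}^\prime$ and shares the $Y_i$'s with the uncorrected local-linear pieces, so the cross-covariance of the two coordinates of $\mathbf{S}_n$ receives three distinct contributions: the classical covariance between $\hat{\mu}_\pm$ and $\hat{\mu}_\pm^\prime$, the covariance of each with $\hat{\mu}_\pm^{(2)}$, and the variance of $\hat{\mu}_\pm^{(2)}$ itself. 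Following the augmented-regression trick of \cite{Calonico_etal:2014}, I would consolidate these by rewriting $\tilde{\tau}_{\mathtt{SRD}}$ and $\tilde{\tau}_{\mathtt{SRD}}^\prime$ as the first two entries of a single local-quadratic regression with a hybrid weighting scheme, so that each entry of $\Omega$ reduces to an equivalent-kernel cross-product integral that admits the explicit closed form deferred to Appendix \ref{sec: proofs}.
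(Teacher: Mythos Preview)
Your plan is correct and mirrors the paper's own argument almost exactly: the paper likewise writes the bias-corrected pair as a linear functional of $Y$, invokes the Taylor-expansion bias calculations and variance formulas from Lemma~S.A.4 of \cite{Calonico_etal:2014}, applies the Cram\'er--Wold device to an arbitrary linear combination $a_0(\tilde{\mu}_+-\mu_+)+a_1 h(\tilde{\mu}_+^\prime-\mu_+^\prime)$, and verifies Lyapunov's fourth-moment condition via $|x+y|^4\leq 8(|x|^4+|y|^4)$. Two minor bookkeeping differences: (i) the paper's $\Omega$ is the \emph{conditional} variance-covariance matrix (built from the data-dependent Gram matrices $\Gamma_\pm$), so no LLN step is needed there---your phrasing suggests $\Omega$ is its probability limit, which also works but is not how the paper states it; (ii) for $\mathrm{C}_{\mathtt{SRD}}$ the paper simply expands the four cross-covariance terms directly rather than using an augmented-regression consolidation, which is cleaner when $b\neq h$ since the single-local-quadratic rewriting only holds exactly for $b=h$.
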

This is just a straightforward extension of the main theorem of \cite{Calonico_etal:2014}.
Importantly, there is \textit{no} additional unknown components in $\Omega$ compared to the asymptotic variance of \cite{Calonico_etal:2014}.
Hence, the same variance estimation procedures proposed in \citet[Section 5]{Calonico_etal:2014} can be employed in our case. 
Furthermore, the rate conditions on the bandwidths are the same as those made in \cite{Calonico_etal:2014}. 
Hence, we can still use the usual MSE-optimal bandwidth for the treatment effect $\tau_{\mathtt{SRD}}$ as the main bandwidth $h$, and that for $\mu_+^{(2)} - \mu_-^{(2)}$ as the bandwidth $b$ for bias estimation, following \citet[Section 4]{Calonico_etal:2014}. That is, an additional bandwidth selection is not needed \textcolor{black}{for valid inference.}\footnote{\textcolor{black}{Although the MSE-optimal bandwidth for $\tau_{\mathtt{SRD}}$ is the option most consistent with standard RD practice, other choices are possible. For example, the asymptotically optimal bandwidth that minimizes $w\mathrm{MSE}[\hat{\tau}_{\mathtt{SRD}}] + (1-w)\mathrm{MSE}[\hat{\tau}_{\mathtt{SRD}}^\prime]$ for $w\in[0,1)$ coincides with the MSE-optimal bandwidth for the derivative $\tau_{\mathtt{SRD}}^\prime$, which is also valid. See also Section \ref{sec: remarks} for an additional remark and Section \ref{sec: simulation} for a simulation comparison.}}

We are now in a position to present the joint inference procedure.
Let $\hat{\Omega}$ be a consistent estimator, and define $\hat{\Omega}_h\coloneqq \mathrm{diag}(1,h^{-1})\hat{\Omega}\mathrm{diag}(1,h^{-1})$.
Due to Lemma \ref{lemma: distribution sharp}, we obtain that $\tilde{{\Delta}}(\tau_{\mathtt{SRD}}, \tau_{\mathtt{SRD}}^\prime)^\top \hat{{\Omega}}_h^{-1} \tilde{{\Delta}}(\tau_{\mathtt{SRD}}, \tau_{\mathtt{SRD}}^\prime) \to_d \chi^2_2$.
Hence, we can construct an asymptotically valid $100(1-\alpha)\%$ confidence region for $({\tau}_{\mathtt{SRD}}, {\tau}_{\mathtt{SRD}}^\prime)$ as follows. Let $\alpha\in(0,1)$ hereafter.
\begin{prop}\label{prop: region}
    Let $\hat{{\Omega}}$ be invertible and $\hat{\Omega}\to_p\Omega$.
    Under the same assumptions in Lemma \ref{lemma: distribution sharp}, an asymptotic $1-\alpha$ confidence region of $(\tau_{\mathtt{SRD}}, \tau_{\mathtt{SRD}}^\prime)$ is given by
    \begin{align*}
        \mathcal{R}_{1-\alpha} \coloneqq \left\{(t, t^\prime)^\top\in\mathbb{R}^2:
        \tilde{{\Delta}}(t, t^\prime)^\top \hat{{\Omega}}_h^{-1} \tilde{{\Delta}}(t, t^\prime) \leq c_{1-\alpha}
        \right\},
    \end{align*}
    where $c_{1-\alpha}$ is the $100(1-\alpha)$-percentile of $\chi_2^2$ distribution.
\end{prop}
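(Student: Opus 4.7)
The plan is to reduce the coverage statement to Lemma \ref{lemma: distribution sharp} by absorbing the scaling matrix $\mathrm{diag}(1,h)$ into the quadratic form and then applying Slutsky's theorem together with the continuous mapping theorem. First I would note the algebraic identity
\begin{align*}
\hat{\Omega}_h^{-1} = \mathrm{diag}(1,h)\,\hat{\Omega}^{-1}\,\mathrm{diag}(1,h),
\end{align*}
which follows at once from the definition $\hat{\Omega}_h = \mathrm{diag}(1,h^{-1})\hat{\Omega}\mathrm{diag}(1,h^{-1})$ and the invertibility of $\hat{\Omega}$. Setting $Z_n \coloneqq \mathrm{diag}(1,h)\,\tilde{\Delta}(\tau_{\mathtt{SRD}},\tau_{\mathtt{SRD}}^\prime)$, this lets me rewrite the random quadratic form defining $\mathcal{R}_{1-\alpha}$ as $\tilde{\Delta}(\tau_{\mathtt{SRD}},\tau_{\mathtt{SRD}}^\prime)^\top \hat{\Omega}_h^{-1} \tilde{\Delta}(\tau_{\mathtt{SRD}},\tau_{\mathtt{SRD}}^\prime) = Z_n^\top \hat{\Omega}^{-1} Z_n$, eliminating the explicit $h$-dependence from the object whose limit law is required.

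Next, Lemma \ref{lemma: distribution sharp} gives $\Omega^{-1/2} Z_n \to_d \mathcal{N}((0,0)^\top,\mathrm{diag}(1,1))$, so by the continuous mapping theorem $Z_n^\top \Omega^{-1} Z_n \to_d \chi^2_2$; in particular $Z_n = O_p(1)$. Since $\hat{\Omega} \to_p \Omega$ with both matrices nonsingular, continuity of matrix inversion on the set of nonsingular matrices (together with the continuous mapping theorem) yields $\hat{\Omega}^{-1} - \Omega^{-1} = o_p(1)$. Combining this with $Z_n = O_p(1)$ via Slutsky's theorem gives $Z_n^\top(\hat{\Omega}^{-1} - \Omega^{-1})Z_n = o_p(1)$, so that $Z_n^\top \hat{\Omega}^{-1} Z_n \to_d \chi^2_2$ as well. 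The coverage conclusion follows from continuity of the $\chi^2_2$ distribution function at $c_{1-\alpha}$: $\P{(\tau_{\mathtt{SRD}},\tau_{\mathtt{SRD}}^\prime) \in \mathcal{R}_{1-\alpha}} = \P{Z_n^\top \hat{\Omega}^{-1} Z_n \leq c_{1-\alpha}} \to 1-\alpha$ by the definition of $c_{1-\alpha}$.

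The entire argument is a mechanical application of Slutsky's theorem and the continuous mapping theorem once Lemma \ref{lemma: distribution sharp} is in hand, so there is no substantive obstacle. The only point requiring any care is the algebraic identity for $\hat{\Omega}_h^{-1}$, which guarantees that the $h$-scaling inside the limit law of Lemma \ref{lemma: distribution sharp} is cancelled exactly by the $h^{-1}$-scaling embedded in the definition of $\hat{\Omega}_h$; this is what ensures the chi-squared limit is pivotal and hence that $\mathcal{R}_{1-\alpha}$ achieves asymptotic coverage $1-\alpha$.
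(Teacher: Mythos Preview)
Your argument is correct and is precisely the standard Slutsky/continuous-mapping route that underlies the paper's own proof, which simply states that the proposition is a corollary to Lemma~\ref{lemma: distribution sharp}. You have merely filled in the routine details the paper omits, so there is nothing to add.
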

This confidence region can be represented graphically as an ellipse; see Sections \ref{sec: empirical} and \ref{sec: simulation} for illustration.
Although somewhat repetitive, it bears emphasis that no additional variance estimation or bandwidth selection is required. Consequently, researchers can incorporate this external validity assessment seamlessly into standard RD analysis without incurring any additional cost.

\subsection{Inference under a Locally Linear Treatment Effects Assumption}\label{subsec: linear}
The confidence region $\mathcal{R}_{1-\alpha}$ is itself a useful object for assessing the external validity of standard RD estimates.
However, in many applied policy settings, a more comprehensive extrapolation analysis, such as that of \cite{Cattaneo_etal:2021JASA_extrapolating}, may be more directly relevant than merely verifying external validity. 
To this end, we introduce an additional layer of assumptions, which we term the \textit{locally linear treatment effects assumption}.\footnote{The idea of imposing additional layer of assumptions is akin to the notion of layered analysis in the partial identification literature (e.g., \citealp{Manski_Nagin:1998, Manski:2011}), where progressively stronger yet still acceptable assumptions are imposed to sharpen conclusions. By analogy, in the RD setting, it seems natural to move from the identification at the cutoff under continuity, to local extrapolation under differentiability, and then to further extrapolation under local linearity.}
\begin{assumption}[Locally Linear Treatment Effects]\label{assumption: linear effects}
    For some $\delta_1, \delta_2>0$, the treatment effect function $\tau(x)$ is a linear function of $x$ over the region $[-\delta_1, \delta_2]$.
\end{assumption}
Before discussing this assumption, we provide the following identification result of the extrapolated treatment effects:
\begin{lemma}\label{lemma: identification}
    Suppose that $\mu_+(x)$ and $\mu_-(x)$ are continuously differentiable. Then, under Assumption \ref{assumption: linear effects}, $\tau(x)$ is identified over $[-\delta_1, \delta_2]$ as $\tau(x) = \tau_{\mathtt{SRD}} + \tau_{\mathtt{SRD}}^\prime x$.
\end{lemma}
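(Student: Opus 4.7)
My approach is to first use Assumption \ref{assumption: linear effects} to reduce $\tau(x)$ on $[-\delta_1,\delta_2]$ to a two-parameter object $\tau(x)=a+bx$, and then to pin down the intercept $a$ and slope $b$ by evaluating $\tau$ and its derivative at the cutoff, where the identification results summarized in Section \ref{subsec: inference} directly apply.

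First, by Assumption \ref{assumption: linear effects}, there exist constants $a,b\in\mathbb{R}$ such that $\tau(x)=a+bx$ for every $x\in[-\delta_1,\delta_2]$. Setting $x=0$ gives $\tau(0)=a$. Since $\mu_+(x)$ and $\mu_-(x)$ are continuously differentiable on a neighbourhood of $0$, the one-sided limits $\mu_+$, $\mu_-$, $\mu_+^\prime$, $\mu_-^\prime$ from the setup in Section \ref{sec: sharp} all exist and coincide with $\mu_+(0)$, $\mu_-(0)$, $\mu_+^\prime(0)$, $\mu_-^\prime(0)$, respectively. Hence $\tau(0)=\mu_+(0)-\mu_-(0)=\mu_+-\mu_-=\tau_{\mathtt{SRD}}$, which identifies $a=\tau_{\mathtt{SRD}}$.

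Next I would pin down $b$. Because $\tau$ is linear on $[-\delta_1,\delta_2]$, it is differentiable at $x=0$ with $\tau^\prime(0)=b$, and the one-sided derivatives of $\tau$ at $0$ both equal $b$. Using continuous differentiability of $\mu_+$ and $\mu_-$ on a neighbourhood of the cutoff, the one-sided derivatives of $\tau$ at $0$ are $\mu_+^\prime-\mu_-^\prime=\tau_{\mathtt{SRD}}^\prime$, so $b=\tau_{\mathtt{SRD}}^\prime$. Substituting into $\tau(x)=a+bx$ gives $\tau(x)=\tau_{\mathtt{SRD}}+\tau_{\mathtt{SRD}}^\prime x$ on $[-\delta_1,\delta_2]$, completing the argument.

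There is essentially no technical obstacle here: the content of the lemma is that Assumption \ref{assumption: linear effects} promotes the purely local quantities $\tau_{\mathtt{SRD}}$ and $\tau_{\mathtt{SRD}}^\prime$, already identified in \cite{Dong_Lewbel:2015}, into a global object on $[-\delta_1,\delta_2]$. The only point that requires a moment of care is that $\tau^\prime(0)$ must be interpreted through the one-sided derivatives of $\mu_+$ and $\mu_-$ (since $\tau$ itself is defined separately on each side of the cutoff), and this is precisely why the continuous differentiability hypothesis on $\mu_+$ and $\mu_-$ is needed rather than mere differentiability of $\tau$ on $[-\delta_1,\delta_2]$.
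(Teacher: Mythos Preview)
Your argument is correct and is essentially the same as the paper's: both invoke \cite{Dong_Lewbel:2015} to identify $\tau_{\mathtt{SRD}}$ and $\tau_{\mathtt{SRD}}^\prime$ under the differentiability hypothesis, and then use Assumption \ref{assumption: linear effects} to conclude that the linear function through $(\tau_{\mathtt{SRD}},\tau_{\mathtt{SRD}}^\prime)$ determines $\tau(x)$ on $[-\delta_1,\delta_2]$. Your write-up simply unpacks the two-line proof in the paper with slightly more detail.
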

Assumption \ref{assumption: linear effects} could be restrictive and may not hold in all empirical settings.\footnote{\cite{Ghosh_etal:2025} recently proposed a similar (and stronger) linear treatment effects assumption in the sharp RD setting (see their Assumption 2). However, they introduced it to improve inference on the standard RD estimand $\tau_{\mathtt{SRD}}$ and do not consider external validity or extrapolation issues. } 
Even so, it provides a natural and tractable starting point for extrapolation analysis, reflecting the common belief that treatment effects are likely to evolve smoothly. Moreover, this extrapolation result is \textit{more general} than simply extending the standard RD effect $\tau_{\mathtt{SRD}}$. 
Hence, it provides a more encompassing perspective than a sole focus on the treatment effect at the cutoff point.

Further, with the help of the additional assumption, Lemma \ref{lemma: identification} provides a way to intuitively and graphically interpret \citeauthor{Dong_Lewbel:2015}'s TED. 
Of course, the TED is a more general quantity and should not be restricted to our interpretation in every scenario. Nevertheless, our framework offers a natural lens through which their TED can be visually understood in practice, and facilitates a more informative extrapolation analysis.

Under the locally linear treatment effects assumption, we can further obtain a useful uncertainty quantification method:
\begin{prop}\label{prop: band}
    In addition to the assumptions made in Proposition \ref{prop: region}, suppose that Assumption \ref{assumption: linear effects} also holds.
    Then, $\lim_{n\to\infty}\P{\tau(x) \in \mathcal{U}_{1-\alpha}(x), \forall x\in[-\delta_1,\delta_2]} = 1-\alpha$, where
    \begin{align*}
        \mathcal{U}_{1-\alpha}(x)\coloneqq\bigg[
        &\tilde{\tau}_{\mathtt{SRD}} + \tilde{\tau}_{\mathtt{SRD}}^\prime x - c_{1-\alpha}^\star\sqrt{(1, x) \hat{\Omega}_h(1,x)^\top} ,\\
        &\qquad\tilde{\tau}_{\mathtt{SRD}} + \tilde{\tau}_{\mathtt{SRD}}^\prime x + c_{1-\alpha}^\star\sqrt{(1, x) \hat{\Omega}_h (1,x)^\top}
        \bigg],
    \end{align*}
    and $c_{1-\alpha}^\star$ is defined by $P(c_{1-\alpha}^\star)=1-\alpha$, where
    \begin{align*}
        P(s)&\coloneqq\frac{\ell}{\pi} \left[1-\exp\left(-\frac{s^2}{2}\right)\right] + 
        \frac{2}{\pi}\int_0^{\pi/2 - \ell/2}1-\exp\left(-\frac{s^2}{2\cos(u)^2}\right)\,du.
    \end{align*}
    Here, $\ell \in [0,\pi]$ denotes the smallest angle between the directions $\hat{v}(-\delta_1)$ and $\hat{v}(\delta_2)$, where $\hat{v}(x) \coloneqq \hat{\Omega}_h^{1/2}(1,x)^\top/||\hat{\Omega}_h^{1/2}(1,x)^\top||$.
\end{prop}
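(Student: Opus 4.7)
The plan is to reduce the uniform coverage event to a $2$-dimensional Gaussian tube probability and then evaluate it by planar geometry. First, Assumption \ref{assumption: linear effects} and Lemma \ref{lemma: identification} give $\tau(x) = \tau_{\mathtt{SRD}} + \tau_{\mathtt{SRD}}^\prime x$ on $[-\delta_1,\delta_2]$, so $\tilde{\tau}_{\mathtt{SRD}} + \tilde{\tau}_{\mathtt{SRD}}^\prime x - \tau(x) = (1, x)\,\tilde{\Delta}(\tau_{\mathtt{SRD}},\tau_{\mathtt{SRD}}^\prime)$ and the coverage event becomes $\{\sup_x |T(x)| \le c^\star_{1-\alpha}\}$ with $T(x):=(1,x)\tilde{\Delta}/\sqrt{(1,x)\hat{\Omega}_h(1,x)^\top}$. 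Using $\hat{\Omega}_h=\mathrm{diag}(1,h^{-1})\hat{\Omega}\,\mathrm{diag}(1,h^{-1})$, the statistic can be rewritten as
\[
T(x) = \frac{(1,x/h)\,[\mathrm{diag}(1,h)\tilde{\Delta}]}{\sqrt{(1,x/h)\,\hat{\Omega}\,(1,x/h)^\top}}.
\]

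Next, Lemma \ref{lemma: distribution sharp} yields $\mathrm{diag}(1,h)\tilde{\Delta} \to_d \Omega^{1/2}Z$ with $Z\sim\mathcal{N}(0,I_2)$, and $\hat{\Omega}\to_p\Omega$. A Slutsky / continuous-mapping argument then gives $\sup_x |T(x)| \Rightarrow \sup_x |v(x)^\top Z|$, where $v(x):=\Omega^{1/2}(1,x/h)^\top/\|\Omega^{1/2}(1,x/h)^\top\|$ is a unit vector on $S^1$. Because $(1,x/h)^\top$ sweeps a straight line segment as $x$ varies, $v(x)$ traces a \emph{strictly monotone} arc on $S^1$, whose arc length coincides with the angle $\ell$ between the endpoints $v(\pm\delta_i)$; consistency of $\hat{\Omega}$ ensures that $\hat\ell$ built from $\hat v$ converges in probability to $\ell$, so replacing $\ell$ by $\hat\ell$ in the critical-value formula is asymptotically harmless.

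The heart of the argument is the planar computation of $\mathbb{P}(\sup_x |v(x)^\top Z|\le s)$. Write $Z=(R\cos\Theta,R\sin\Theta)$ with $R^2\sim\chi_2^2$ independent of $\Theta\sim\mathrm{Unif}[0,2\pi)$, and let $\phi(x)$ denote the polar angle of $v(x)$, with image $[\phi_1,\phi_2]$ of length $\ell$. Then $|v(x)^\top Z|=R|\cos(\Theta-\phi(x))|$ and $\sup_x|v(x)^\top Z|=R\sup_{\phi\in[\phi_1,\phi_2]}|\cos(\Theta-\phi)|$. Decompose by $\Theta$: (a) if $\Theta$ falls in $[\phi_1,\phi_2]\cup[\phi_1+\pi,\phi_2+\pi]$ (total angular measure $2\ell$, probability $\ell/\pi$), the inner sup equals $1$, so $\sup_x=R$ and this case contributes $(\ell/\pi)(1-e^{-s^2/2})$; (b) otherwise the sup is attained at the nearer endpoint, and after exploiting antipodal symmetry and the symmetry of $|\cos|$ around $\pi/2$ one parametrizes by the angular distance $u\in[0,\pi/2-\ell/2]$ so that the endpoint value becomes $R\cos(u)$, giving $(2/\pi)\int_0^{\pi/2-\ell/2}(1-e^{-s^2/(2\cos^2 u)})\,du$. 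Adding the two contributions recovers exactly $P(s)$, and the choice $P(c^\star_{1-\alpha})=1-\alpha$ delivers the stated uniform coverage.

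The main obstacle is step (b) above: verifying that $v(\cdot)$ is a strictly monotone arc (so that its arc length equals the endpoint angle $\ell$), and correctly identifying the endpoint-attained sup as $R\cos u$ under the right parametrization — in particular handling the four-fold symmetry (two complementary arcs in $\Theta$, each halved by reflection about its midpoint) that produces the prefactor $2/\pi$ in the integral. The rest is routine: polar integration against independent $(R,\Theta)$, followed by a Slutsky step that transfers the computation from $(\Omega,\ell)$ to $(\hat\Omega,\hat\ell)$ using continuity of $P(\cdot)$ in $\ell$.
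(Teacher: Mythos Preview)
Your proposal is correct and follows essentially the same route as the paper: reduce to $\sup_x|v(x)^\top Z|$ via Lemma~\ref{lemma: distribution sharp}, write $Z$ in polar coordinates, verify that $v(\cdot)$ traces a monotone arc on $S^1$ (the paper does this by computing $\partial_x\arg(u_1(x)+iu_2(x))=\det(\Omega_h^{1/2})/\|u(x)\|^2>0$), and then split according to whether the uniform angle lands inside the arc or outside it to obtain the two pieces of $P(s)$. Your decomposition (a)/(b) corresponds exactly to the paper's partition $\mathrm{I}_1$ versus $\mathrm{I}_2\cup\mathrm{I}_3$, and your final Slutsky step replacing $\Omega$ by $\hat\Omega$ matches the paper's closing sentence.
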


Although it may appear obscure at first glance, the confidence band $\mathcal{U}_{1-\alpha}(x)$ is related to the joint confidence region $\mathcal{R}_{1-\alpha}$.
Noting that $1-\exp(-s^2/2)$ is the distribution function of the standard Rayleigh distribution, it follows that $c_{1-\alpha}^\star \to \sqrt{c_{1-\alpha}}$ as $\min\{\delta_1, \delta_2\}\to\infty$, i.e., $\ell\to\pi$.
In this sense, the confidence band $\mathcal{U}_{1-\alpha}(x)$ admits an alternative interpretation: as the extrapolation interval expands, $\mathcal{U}_{1-\alpha}(x)$ converges to the \textit{projection envelope} $\mathcal{E}_{1-\alpha}(x)$ of the ellipse $\mathcal{R}_{1-\alpha}$ onto the linear functions, where
\begin{align*}
    \mathcal{E}_{1-\alpha}(x)\coloneqq\bigg[
    &\tilde{\tau}_{\mathtt{SRD}} + \tilde{\tau}_{\mathtt{SRD}}^\prime x - \sqrt{c_{1-\alpha}(1, x) \hat{\Omega}_h (1,x)^\top} ,\\
    &\qquad\tilde{\tau}_{\mathtt{SRD}} + \tilde{\tau}_{\mathtt{SRD}}^\prime x + \sqrt{c_{1-\alpha}(1, x) \hat{\Omega}_h (1,x)^\top}
    \bigg],
\end{align*}
which can be regarded as the envelope of the linear extrapolations corresponding to all $(\tilde{\tau}_{\mathtt{SRD}}, \tilde{\tau}_{\mathtt{SRD}}^\prime)$ pairs contained in $\mathcal{R}_{1-\alpha}$.\footnote{\textcolor{black}{It follows that $c_{1-\alpha}^\star \leq \sqrt{c_{1-\alpha}}$, and hence $\mathcal{U}_{1-\alpha}(x)\subseteq\mathcal{E}_{1-\alpha}(x)$. Therefore, $\mathcal{E}_{1-\alpha}(x)$ is a valid confidence band, albeit generally conservative over a finite interval.
Intuitively, the inclusion (rather than equality) reflects that different $x$ may have different least-favorable directions: over a wide extrapolation region, many pairs $(\tau,\tau^\prime)\in\mathcal{R}_{1-\alpha}$ can be least favorable for some $x$, whereas on a restricted interval only a subset of directions---formally captured by $\hat v(x)$---matters, yielding a tighter band than the one obtained by mere projection (see the proofs of Proposition~\ref{prop: band} and Lemma~\ref{lemma: crit value increasing}). 
In practice, however, the difference between the two bands may be small when the extrapolation region is only moderately wide; in our applications they are nearly indistinguishable (not reported).
}}

Moreover, we can show that $c_{1-\alpha}^\star \to \Psi^{-1}(1-\alpha/2)$ as $\max\{\delta_1, \delta_2\}\to 0$ (and hence $x\to0$), where $\Psi^{-1}$ denotes the quantile function of the standard normal distribution. Thus, $\mathcal{U}_{1-\alpha}(x)$ collapses to the usual RBC confidence interval for $\tau_{\mathtt{SRD}}$ of \cite{Calonico_etal:2014}.
That is, $\mathcal{U}_{1-\alpha}(x)$ can be viewed as a natural extension of the RBC confidence intervals to linear extrapolation analysis.

We conclude this subsection with a remark on the choice of the extrapolation window $[-\delta_1, \delta_2]$.
This choice should be guided by the researcher's expertise and research question. At first glance, this may appear to allow room for specification search, but that is not the case.
\textcolor{black}{
First, because the intercept and slope are pinned down by the estimates at the cutoff, the point estimates of the extrapolated line are invariant to the choice of the extrapolation region. Moreover, as shown below, it can be justified to select the largest interval that is plausible given the researcher's institutional knowledge and to report the confidence band over this largest region.}

Suppose an economist assumes local linearity over the interval $\mathcal{I}_{\mathtt{EC}}$ and reports the confidence band $\mathcal{U}_{1-\alpha}^{\mathtt{EC}}(x)$ on this interval. 
A policymaker (or reader), however, may believe that $\mathcal{I}_{\mathtt{EC}}$ is too wide, and that linearity holds only on a smaller interval $\mathcal{I}_{\mathtt{PM}}\subset \mathcal{I}_{\mathtt{EC}}$. 
If the policymaker is correct, then $\mathcal{U}_{1-\alpha}^{\mathtt{EC}}(x)$ is not valid over the full $\mathcal{I}_{\mathtt{EC}}$; nevertheless, it still satisfies
\begin{align*}
    \lim_{n\to\infty}\P{\tau(x) \in \mathcal{U}_{1-\alpha}^{\texttt{EC}}(x), \forall x\in\mathcal{I}_{\mathtt{PM}}} \geq 1-\alpha,
\end{align*}
which is a direct consequence of the following lemma:
\begin{lemma}\label{lemma: crit value increasing}
Let $c_{1-\alpha}^\star(\delta_1,\delta_2)$ denote the critical value defined in Proposition \ref{prop: band} for the extrapolation window $[-\delta_1,\delta_2]$. 
If $\delta_1^\prime\leq \delta_1$ and $\delta_2^\prime \leq \delta_2$, then $c_{1-\alpha}^\star(\delta_1^\prime,\delta_2^\prime) \leq c_{1-\alpha}^\star(\delta_1,\delta_2)$ holds.
That is, the critical value is nondecreasing under nested extensions of the extrapolation window.
\end{lemma}

Therefore, the confidence band $\mathcal{U}_{1-\alpha}^{\mathtt{EC}}(x)$ reported in a research article remains valid on any subinterval where Assumption \ref{assumption: linear effects} holds, even if the assumption fails on the full interval $\mathcal{I}_{\mathtt{EC}}$. 
In other words, the policymaker can rely on the reported confidence band to conduct a valid inference, \textit{without access to the underlying research data}.
This robustness justifies choosing $\delta_1$ and $\delta_2$ as large as is plausible given the researcher’s institutional knowledge, provided that the entire confidence band is transparently reported.

\subsection{Remarks}\label{sec: remarks}
\paragraph{Extrapolation Region.}
\textcolor{black}{In practice, a researcher must choose $(\delta_1,\delta_2)$, which should be guided by the research question and empirical context. That said, depending on the setting, it may be difficult to justify a particular extrapolation window, or determine the ``baseline specification" to interpret the result. In such cases, we propose considering $\delta_1=\delta_2\approx h$ as a rule of thumb. In finite samples, local linear estimation approximates $\mu_+(x)$ and $\mu_-(x)$ by linear functions over $[0,h]$ and $[-h,0]$, respectively. It seems therefore natural to adopt a linear treatment effect approximation over the same window, $[-h,h]$, as a benchmark.}

\paragraph{Testing the Local Linearity.}
\textcolor{black}{Assumption \ref{assumption: linear effects} has a testable implication $\tau^{(2)}(0)=0$. This can be tested by the local polynomial regression (see, e.g., \citealp{calonico2020optimal}).}
In cases where additional information such as multiple cutoffs is available, this assumption can be tested under alternative identification assumptions. We do such exercises in our empirical applications.

\paragraph{MISE-Optimal Bandwidth.}
\textcolor{black}{Although the MSE-optimal bandwidth for $\tau_{\mathtt{SRD}}$ is the choice most consistent with standard RD practice, one may instead be interested in the MISE-optimal bandwidth in the context of extrapolation. Note that
\begin{align*}
    &\mathrm{MISE} = \int_{-\delta_1}^{\delta_2}\E{\left\{\left(\hat{\tau}_{\mathtt{SRD}} + \hat{\tau}_{\mathtt{SRD}}^\prime x\right)-\left({\tau}_{\mathtt{SRD}} + {\tau}_{\mathtt{SRD}}^\prime x\right)\right\}^2}\,dx\\
    &= \mathrm{MSE}\left[\hat{\tau}_{\mathtt{SRD}}\right]\int_{-\delta_1}^{\delta_2}\,dx+
    \mathrm{MSE}\left[\hat{\tau}_{\mathtt{SRD}}^\prime\right]\int_{-\delta_1}^{\delta_2} x^2\,dx+
    2\E{\left(\hat{\tau}_{\mathtt{SRD}}-{\tau}_{\mathtt{SRD}}\right)\left(\hat{\tau}_{\mathtt{SRD}}^\prime-{\tau}_{\mathtt{SRD}}^\prime\right)}\int_{-\delta_1}^{\delta_2} x\,dx.
\end{align*}
By a standard Taylor-expansion argument, the second term is the dominant term as long as $\max\{\delta_1,\delta_2\}<\infty$. Hence, the asymptotically MISE-optimal bandwidth coincides with the MSE-optimal bandwidth for ${\tau}_{\mathtt{SRD}}^\prime$. Alternatively, one can numerically choose the bandwidth that minimizes the sum of asymptotic approximations to the three terms above. See Section \ref{sec: simulation} for a simulation comparison of these bandwidth selectors.}

\paragraph{Fuzzy RD Design.}
\textcolor{black}{
It is straightforward to extend our framework to the fuzzy RD design. In the fuzzy case, the target parameter is given by $\tau_{\mathtt{FRD}}\coloneqq \mathbb{E}[Y_i(1) - Y_i(0) | X_i = 0, T_i(1)> T_i(0)]$, where $T(0)\in\{0,1\}$ and $T(1)\in\{0,1\}$ denote the potential treatment indicator. We observe $T_i = T_i(1)\mathbf{1}\{X_i \geq 0\} + T_i(0)\mathbf{1}\{X_i < 0\}$. The local external validity is then assessed by the fuzzy TED, $\tau_{\mathtt{FRD}}^\prime$.
Under Assumption A3 of \cite{Dong_Lewbel:2015}, these are identified as
\begin{align*}
\tau_{\mathtt{FRD}}
= \frac{\tau_Y}{\tau_T}
\coloneqq \frac{\mu_{Y,+}-\mu_{Y,-}}{\mu_{T,+}-\mu_{T,-}},\quad
\tau_{\mathtt{FRD}}^\prime
= \frac{\tau_Y^\prime\tau_T-\tau_Y\tau_T^\prime}{\tau_T^{2}},
\end{align*}
where $\mu_{A,+}\coloneqq \lim_{x\downarrow0}\mu_{A}(x)$, $\mu_{A,-}\coloneqq \lim_{x\uparrow0}\mu_{A}(x)$, and $\mu_A(x)\coloneqq \E{A_i | X_i=x}$, where $A$ equals either $Y$ or $T$ \citep[Theorem 2]{Dong_Lewbel:2015}. The Online Appendix derives the joint asymptotic distribution of $(\tau_{\mathtt{FRD}}, \tau_{\mathtt{FRD}}^\prime)$, and hence one can draw a similar joint confidence region. For extrapolation exercise, we may assume the local linearity of $\mathbb{E}[Y_i(1) - Y_i(0) | X_i = x, T_i(1)> T_i(0)]$ in $x\in[-\delta_1,\delta_2]$. Then we can obtain analogous results to those in Lemma \ref{lemma: identification} and Proposition \ref{prop: band}. 
}

\section{Empirical Illustrations}\label{sec: empirical}
We illustrate the potential usefulness of our proposal through two emprical applications.

\subsection{Londoño-Vélez, Rodríguez, and Sánchez (2020)}
\paragraph{Empirical Context.}
\cite{Londono-Velez_etal:2020aejep} examines the effect of Ser Pilo Paga (SPP), a financial aid program introduced in Colombia in 2014, on students' probability of enrolling in higher education, using a standard RD design. 

Among our subsample of interest, consisting of students from metropolitan areas who meet a minimum academic standard ($n=19{,}738$), eligibility for the SPP scholarship is determined by a family wealth index. 
After normalizing the running variable so that the eligibility cutoff equals zero, and multiplying it by $-1$ to align with the canonical RD convention in which units above the cutoff are treated, the wealth index ranges from $-25.94$ to $56.86$, with the first quartile at $-3.76$ and the third quartile at $22.78$. Larger values correspond to poorer households, and students with an index above zero are eligible for the SPP scholarship.

\paragraph{External Validity and Extrapolation Analysis.}
We begin by assessing the standard RD effect and its local external validity, $(\tau_{\mathtt{SRD}}, \tau_{\mathtt{SRD}}^\prime)$, using the joint robust confidence region. 
For bandwidth selection and standard error estimation, we follow the \texttt{rdrobust} package \citep{rdrobust, rdrobust2} with resulting bandwidhts $(h,b)=(3.8,7.5)$.
Figure \ref{fig: spp1} reports the 95\% confidence region $\mathcal{R}_{0.95}$, where the horizontal axis represents the treatment effect and the vertical axis corresponds to the TED. The figure shows that the SPP has a sizable and statistically significant effect on enrollment probability. At the same time, the confidence region along the TED dimension is relatively tight and centred close to zero, indicating that $\tau_{\mathtt{SRD}}^\prime$ is small and statistically indistinguishable from zero.
\begin{figure}[t]
    \begin{center}
    \begin{subfigure}[b]{0.45\textwidth}
        \centering
        \includegraphics[width=1\linewidth]{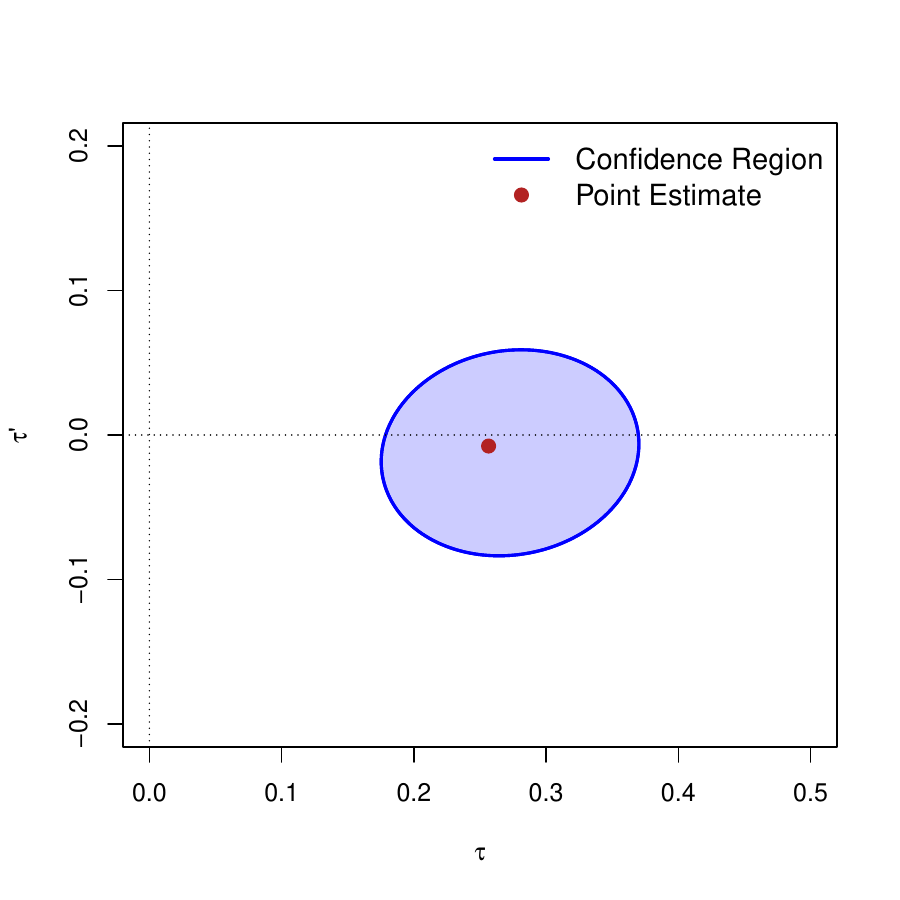}
        \caption{95\% Robust Confidence Region}
        \label{fig: spp1}
    \end{subfigure}
    \hfill
    \begin{subfigure}[b]{0.45\textwidth}
        \centering
        \includegraphics[width=1\linewidth]{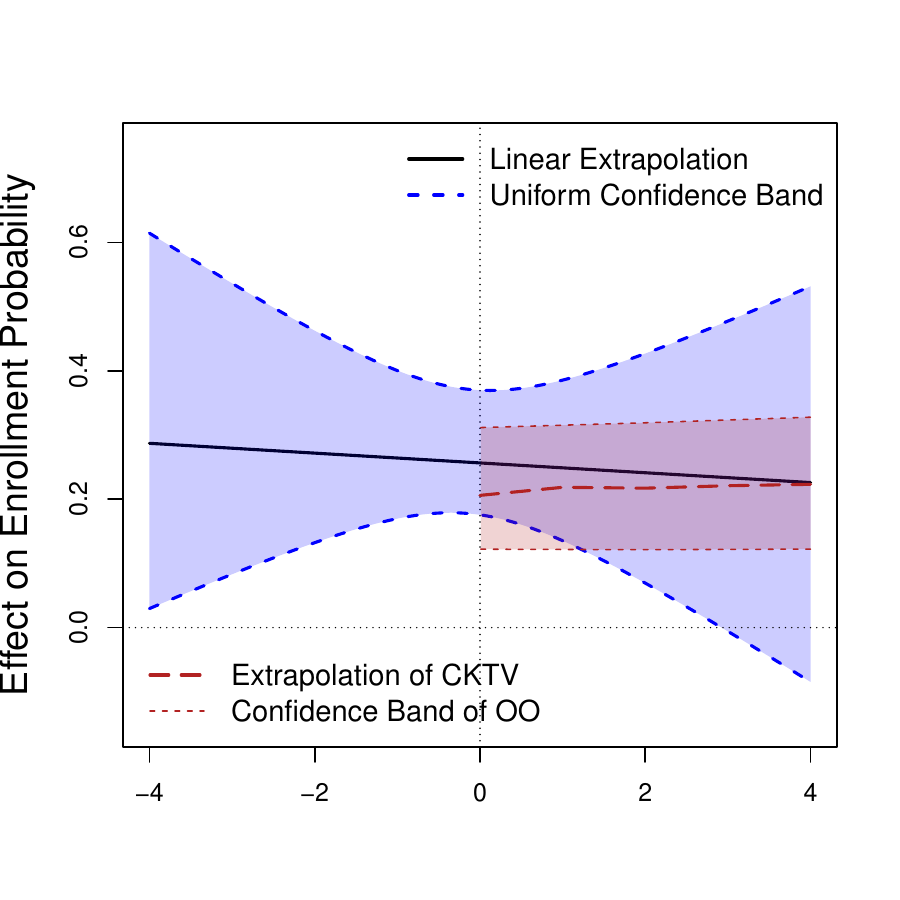}
        \caption{Linear Extrapolation and Confidence Band}
        \label{fig: spp2}
    \end{subfigure}
    \caption{External Validity and Extrapolation of Treatment Effects}
    \label{fig: spp}
    \end{center}

    \footnotesize
    \renewcommand{\baselineskip}{11pt}
    \textbf{Note:} \textbf{Panel (a).} The dependent variable is a binary indicator of whether the student enrolled in higher education. Hence, the treatment effect reported on the horizontal axis should be interpreted as the effect on the probability of enrollment.
    \textbf{Panel (b).} Linear Extrapolation indicates the point estimates $\hat{\tau}_{\mathtt{SRD}} + \hat{\tau}_{\mathtt{SRD}}^\prime x$, and Uniform Confidence Band shows $\mathcal{U}_{0.95}(x)$.
    For clarity of presentation, we report only the point estimates of \cite{Cattaneo_etal:2021JASA_extrapolating} and the uniform confidence band for the bounds proposed by \cite{Okamoto_Ozaki:2025}. Larger score values correspond to poorer households.
\end{figure}

This finding provides direct evidence of the local external validity of the RD effect of the SPP program. Importantly, such an implication could not be drawn from the treatment effect $\tau_{\mathtt{SRD}}$ alone; it emerges only when the derivative is taken into account. Hence, as emphasized by \cite{Dong_Lewbel:2015}, examining $\tau_{\mathtt{SRD}}^\prime$ alongside $\tau_{\mathtt{SRD}}$ is crucial for clarifying the policy relevance of RD estimates.
Our joint inference procedure operationalizes this insight by providing a formal statistical framework that allows both quantities to be assessed simultaneously.

While suggestive, the confidence region $\mathcal{R}_{0.95}$ in Figure \ref{fig: spp1} provides only local information and is less informative than a full extrapolation analysis. To address this limitation, we impose the locally linear treatment effects assumption over \textcolor{black}{$[-4,4]$ following the rule of thumb $\delta_1=\delta_2\approx h$}, with its plausibility briefly discussed at the end of this section. %Figure \ref{fig: spp2} displays the uniform confidence band $\mathcal{U}_{0.95}(x)$ together with the envelope $\mathcal{E}_{0.95}$; the two regions nearly coincide.

Figure \ref{fig: spp2} shows the extrapolated treatment effects along with the uniform confidence band $\mathcal{U}_{0.95}(x)$, which delivers clearer implications. Consistent with the robust confidence region in Figure \ref{fig: spp1}, the uniform confidence band indicates positive treatment effects over $[-4,3]$, although the effectiveness of SPP at more distant scores remains uncertain. Furthermore, the extrapolated line is rather flat, suggesting the external validity of the local estiamte $\tau(0)$.

\textcolor{black}{
The positive, seemingly constant effect over $[-4,3]$ can be economically meaningful. In fact, the extrapolated region $[-4,3]$ lies around the mode of the running variable's distribution and accounts for more than 16\% of our sample. From this perspective, the RD effect is no longer purely local; rather, it is fairly representative---at least for the subpopulation near the mode of the distribution.
}

\textcolor{black}{
To further illustrate the policy implications, consider a simple counterfactual in which the policymaker adjusts the eligibility cutoff for the SPP scholarship. The discussion below assumes that the economy is invariant to the policy change (see \citealp{Dong_Lewbel:2015} for details). Suppose the government is considering relaxing the eligibility cutoff to $c^\star=-4$. This change would expand the share of students offered the SPP scholarship by approximately 10 percentage points (from the current 66\% to about 76\% of urban students). Because the estimated treatment effect remains close to the RD effect at the original cutoff even away from the threshold, we project that the resulting increase in higher-education enrollment is approximately $0.10\times\hat{\tau}_{\mathtt{SRD}}\approx 2.6\%$.
}

\paragraph{Testing Local Linearity.}
\textcolor{black}{
The key assumption of the previous extrapolation analysis is the local linearity of the treatment effects function $\tau(x)$. As a falsification test, one can test $\tau^{(2)}(0)=0$. We employ the local cubic smoother and the second derivative is estiamted as $\hat{\tau}^{(2)}(0)=0.009$ with $p$-value of $0.891$ \citep{calonico2020optimal}.
}

\textcolor{black}{
In addition, the linearity also consistent with the findings from alternative approaches that exploit multiple cutoff points, such as \cite{Cattaneo_etal:2021JASA_extrapolating} and \cite{Okamoto_Ozaki:2025}, represented by the red dashed lines and shaded region in Figure \ref{fig: spp2}.\footnote{In this application, there exists another group of students from rural regions who face a higher (i.e., more strict) cutoff point. As a result, the regression function of the enrollment probability on the wealth index under the control status can be observed for this additional group even on the right-hand side of the cutoff. \citeauthor{Cattaneo_etal:2021JASA_extrapolating}’s (\citeyear{Cattaneo_etal:2021JASA_extrapolating}) strategy achieves point-identified extrapolation by assuming that the regression functions of rural and metropolitan students are ``parallel" over $[0,4]$. By contrast, \cite{Okamoto_Ozaki:2025} assumes that the regression function for rural students is monotonically decreasing as the household becomes poorer, and that it lies below that of metropolitan students, thereby yielding partial identification.}
The linearly extrapolated line on the right-hand side of the cutoff lies within the confidence band of \cite{Okamoto_Ozaki:2025} and closely resembles the extrapolated curve of \cite{Cattaneo_etal:2021JASA_extrapolating}.
This concordance supports the plausibility of the locally linear treatment effects assumption in this application.
}

\subsection{Holden (2016)}
\paragraph{Empirical Context.}
\textcolor{black}{
\cite{Holden:2016} examines the effect of a one-time textbook funding intervention stemming from the 2004 Williams settlement in California. 
Elementary schools in the bottom two deciles of the 2003 Academic Performance Index (API)---elementary schools with API $\leq$ 643---received an extra \$96.90 per student on top of a base \$54.22.
Exploiting this discontinuity, the paper implements a sharp RD design to estimate the local causal effect of textbook funding on school-level standardized test performance, finding a positive effect.
}

\textcolor{black}{
We again normalize the API by centering it at the threshold (cutoff $=$ 0) and multiplying by $-1$ so that values above zero correspond to treatment; in this scale, the running variable ranges from $-353$ to $268$.
Larger values correspond to low performing schools.
}

\paragraph{External Validity and Extrapolation Analysis.}
\textcolor{black}{
Figure \ref{fig: book1} reports the $95\%$ joint confidence region, using the bandwidths $(h,b)=(20.5,42.3)$ estimated by \texttt{rdrobust} package, to assess the external validity of the RD estimate.
Although the estimated treatment effect is positive and statistically significant, we find that the estimated TED is negative and is also statistically differ from zero.
This may suggest that the treatment effect function is not a constant function even locally, thereby casting a certain concern about the external validity of the standard RD effect.
}
\begin{figure}[t]
    \begin{center}
    \begin{subfigure}[b]{0.45\textwidth}
        \centering
        \includegraphics[width=1\linewidth]{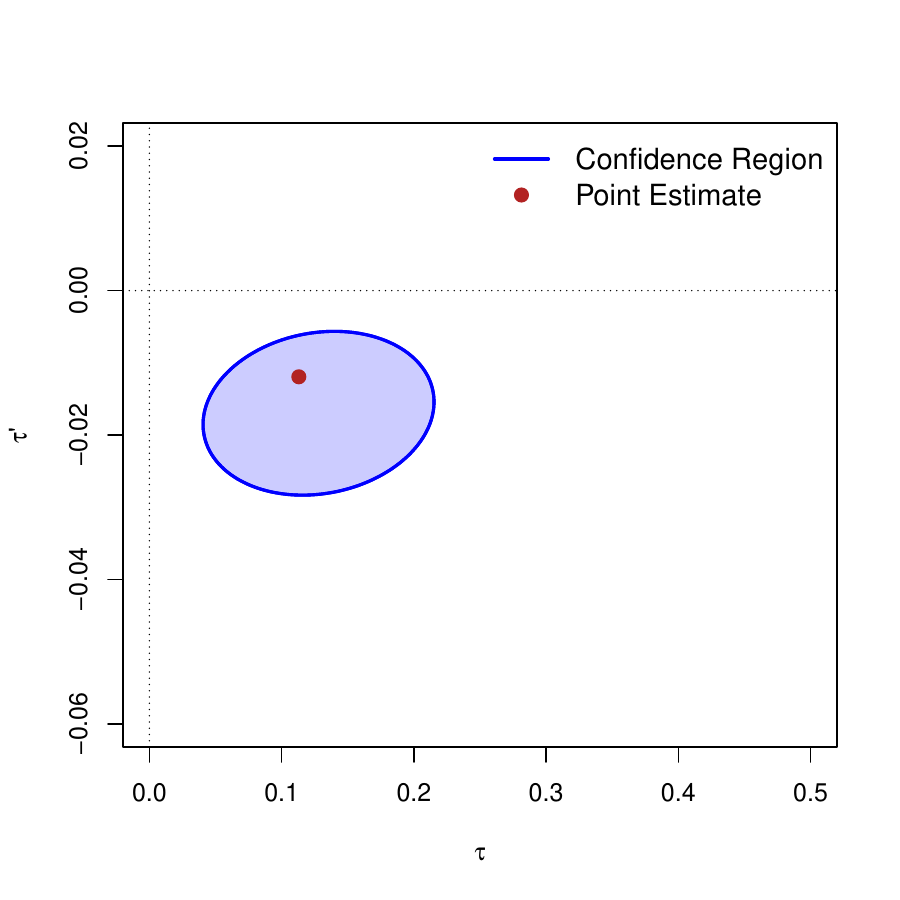}
        \caption{95\% Robust Confidence Region}
        \label{fig: book1}
    \end{subfigure}
    \hfill
    \begin{subfigure}[b]{0.45\textwidth}
        \centering
        \includegraphics[width=1\linewidth]{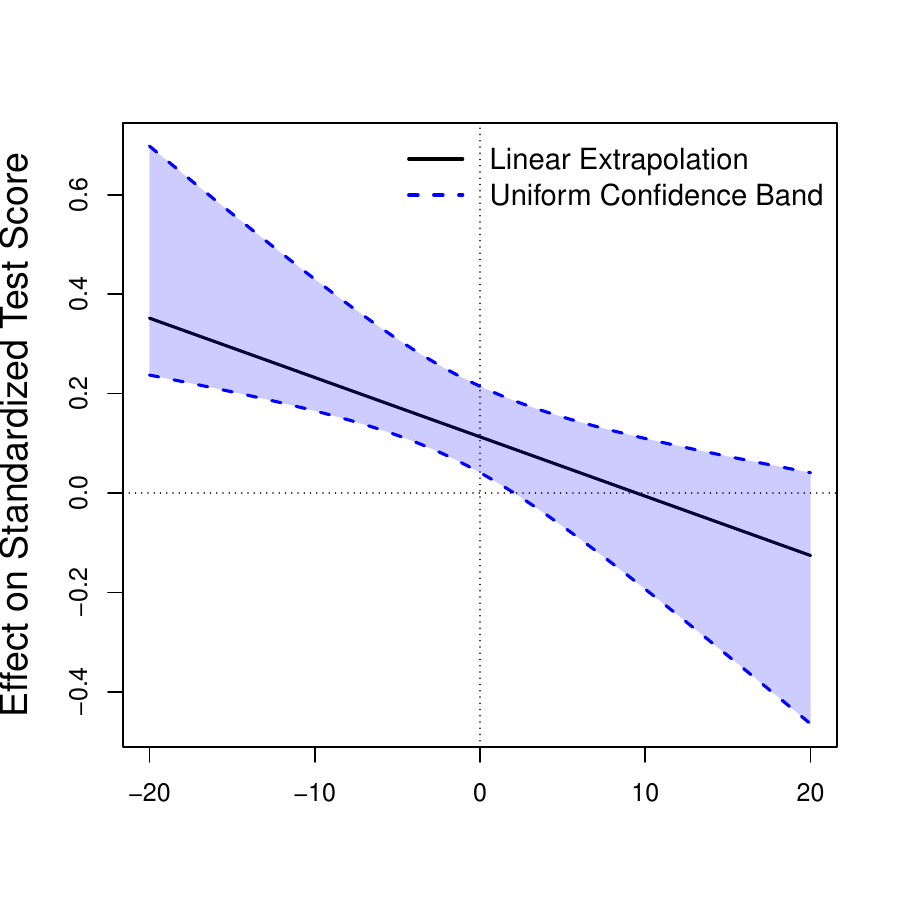}
        \caption{Linear Extrapolation and Confidence Band}
        \label{fig: book2}
    \end{subfigure}
    \caption{External Validity and Extrapolation of Treatment Effects}
    \label{fig: book}
    \end{center}

    \footnotesize
    \renewcommand{\baselineskip}{11pt}
    \textbf{Note:} \textbf{Panel (a).} The dependent variable is a normalized test score.
    \textbf{Panel (b).} Linear Extrapolation indicates the point estimates $\hat{\tau}_{\mathtt{SRD}} + \hat{\tau}_{\mathtt{SRD}}^\prime x$, and Uniform Confidence Band shows $\mathcal{U}_{0.95}(x)$. Larger score values correspond to low performing schools.
\end{figure}

\textcolor{black}{
To more clearly assess the external validity of the RD estimate, we impose a locally linear treatment effects assumption over the window $[-20,20]$, following the rule of thumb choice $\delta_1=\delta_2\approx h$.
We compute the implied treatment effects over the region in Figure \ref{fig: book2}. 
The figure yields two salient implications. First, treatment effects can become statistically indistinguishable from zero even for slightly larger score values. Second, effects appear comparatively larger at lower score values within the extrapolation window.
}

\textcolor{black}{
These patterns suggest that the effect identified at the cutoff (API $=643$) may not generalize uniformly across the score distribution. In particular, the local RD estimate could overstate the effect for lower-performing schools. 
One interpretation is that schools struggling with low educational attainment may benefit from more direct and intensive forms of support, such as an after-school program \citep{Muralidharan:2019}, and that textbook funding alone may be insufficient to generate sizable improvements, consistent with \cite{Glewwe:2009}.
}

\textcolor{black}{
Conversely, the extrapolation also hints that better-achieving schools could experience effects that exceed the RD estimate at the cutoff. This pattern would be consistent with complementarities between improved access to instructional materials and pre-existing student ability or school capacity. 
For instance, higher-performing students may better benefit from additional textbooks \citep{Glewwe:2009}. Likewise, schools with greater instructional capacity (e.g., more experienced teachers) may be better able to integrate new materials into classroom practice, thereby translating improved textbook access into larger test-score gains, consistent with \cite{Mueller:2013}.
}

\textcolor{black}{
Building on these findings, we can also consider a simple, illustrative policy experiment: if treatment effects are indeed heterogeneous across the running variable, then reallocating a fixed textbook budget could improve cost-effectiveness. 
In our sample, $\P{-20<X_i\leq0}\approx\P{0\leq X_i<20}\approx 6\%$, so excluding the interval $[0,20)$ while covering $[-20,0]\cup[20,\infty)$\footnote{\textcolor{black}{Treatment effects over the region $[20,\infty)$ are uncertain in our framework, so we leave the treatment assignment rule for those schools unchanged.}} would keep the number of eligible schools roughly comparable to the status quo but shift resources toward score ranges where the extrapolated effects are not attenuated. Under this counterfactual assignment rule, the textbook funding program may yield larger average test-score gains.
}

\paragraph{Testing Local Linearity.}
\textcolor{black}{
In the absense of the multi-cutoffs as in the previous illustration, we have to rely on the test of $\tau^{(2)}(0)=0$. We employ the local cubic smoother and the second derivative is estiamted as $\hat{\tau}^{(2)}(0)=-0.006$ with $p$-value of $0.056$.
}

\section{Simulation Studies}\label{sec: simulation}
We present numerical experiments that assess the finite-sample performance of the proposed procedure.
We consider the following three data-generating processes (DGPs): $Y_i = \mu_-(X_i)\mathbf{1}\{X_i<0\}+\mu_+(X_i)\mathbf{1}\{X_i\geq0\} + \varepsilon_i$, $X_i \overset{\mathrm{iid}}{\sim} \mathrm{Uniform}[-5,5]$, and $\varepsilon_i\overset{\mathrm{iid}}{\sim}\mathcal{N}(0,1)$, where
\begin{align}
    &\mu_-(x) = x + \eta(x),\quad
    \mu_+(x) = 2 + x + \eta(x),
    \tag{DGP 1}\label{eq: dgp1}\\
    &\mu_-(x) = x + \eta(x),\quad
    \mu_+(x) = 2 - x + \eta(x),
    \tag{DGP 2}\label{eq: dgp2}\\
    &\mu_-(x) = x - \frac{1}{5}x^2 + \eta(x),\quad
    \mu_+(x) = 2 - x - \frac{1}{10} x^2 + \eta(x)
    \tag{DGP 3}\label{eq: dgp3},
\end{align}
and $\eta(x) \coloneqq \sin(x)/10 + \cos(x)/10$. Plots of the DGPs are provided in the Online Appendix.
In every case, the standard RD effect is $\tau_{\mathtt{SRD}} = 2$.
In contrast the TED is different: $\tau_{\mathtt{SRD}}^\prime = 0$ for \ref{eq: dgp1}, and $\tau_{\mathtt{SRD}}^\prime = -2$ for \ref{eq: dgp2} and \ref{eq: dgp3}. For the first two DGPs, the treatment effect function is linear in $x$, while it is not for \ref{eq: dgp3}. %We consider the sample sizes $n=1000$ and $n=500$.

\paragraph{One-Shot Simulation Illustration.}
\begin{figure}[t]
    \begin{center}
    \begin{subfigure}[b]{0.32\textwidth}
        \centering
        \includegraphics[width=\linewidth]{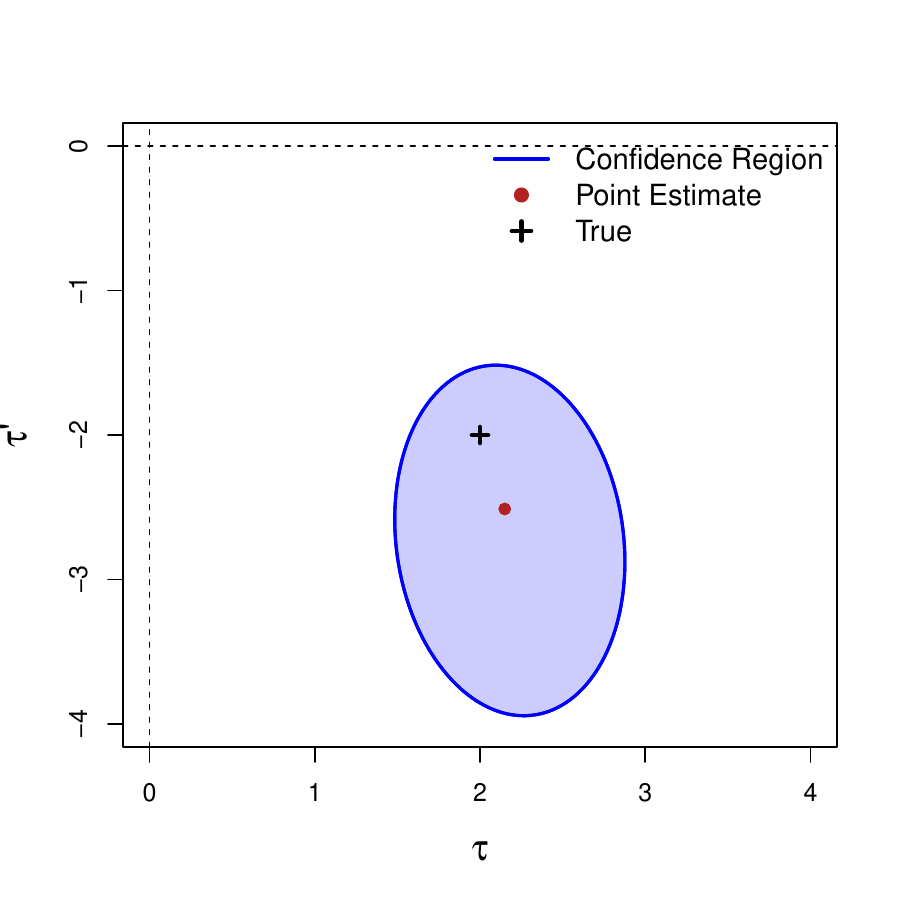}
        \caption{$\mathcal{R}_{0.95}$}
        \label{fig: simu1}
    \end{subfigure}
    \hfill
    \begin{subfigure}[b]{0.32\textwidth}
        \centering
        \includegraphics[width=\linewidth]{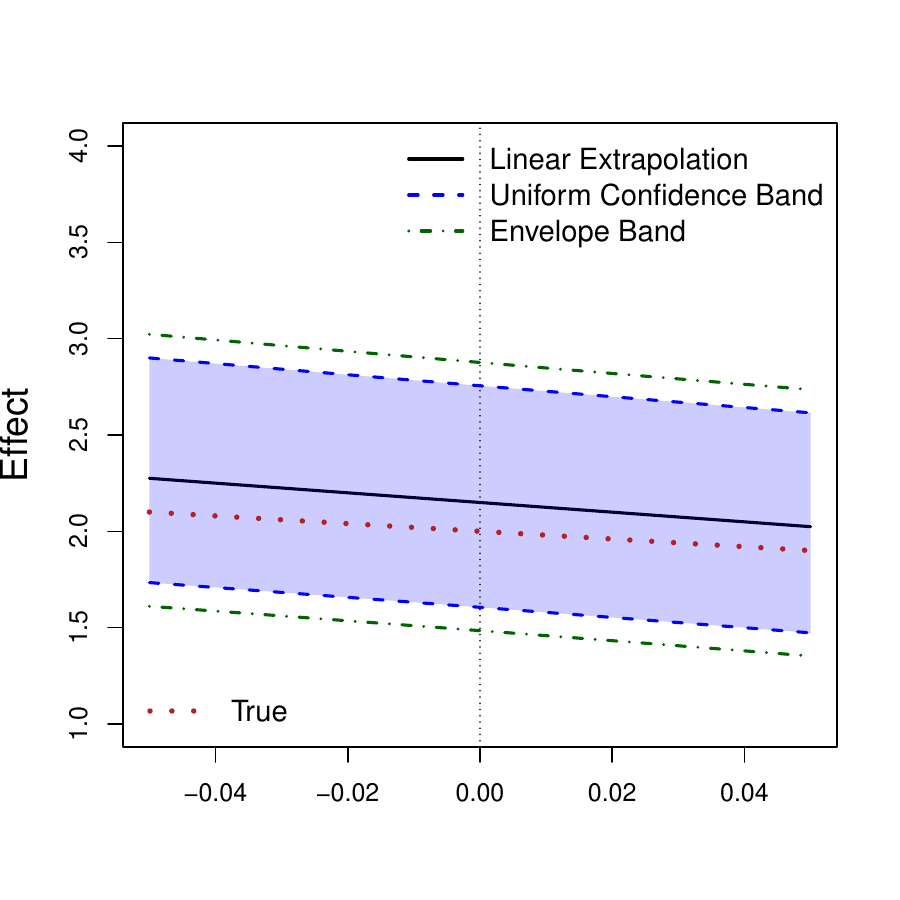}
        \caption{$\mathcal{U}_{0.95}$ ($\delta_1=\delta_2=0.05$)}
        \label{fig: simu2}
    \end{subfigure}
    \hfill
    \begin{subfigure}[b]{0.32\textwidth}
        \centering
        \includegraphics[width=\linewidth]{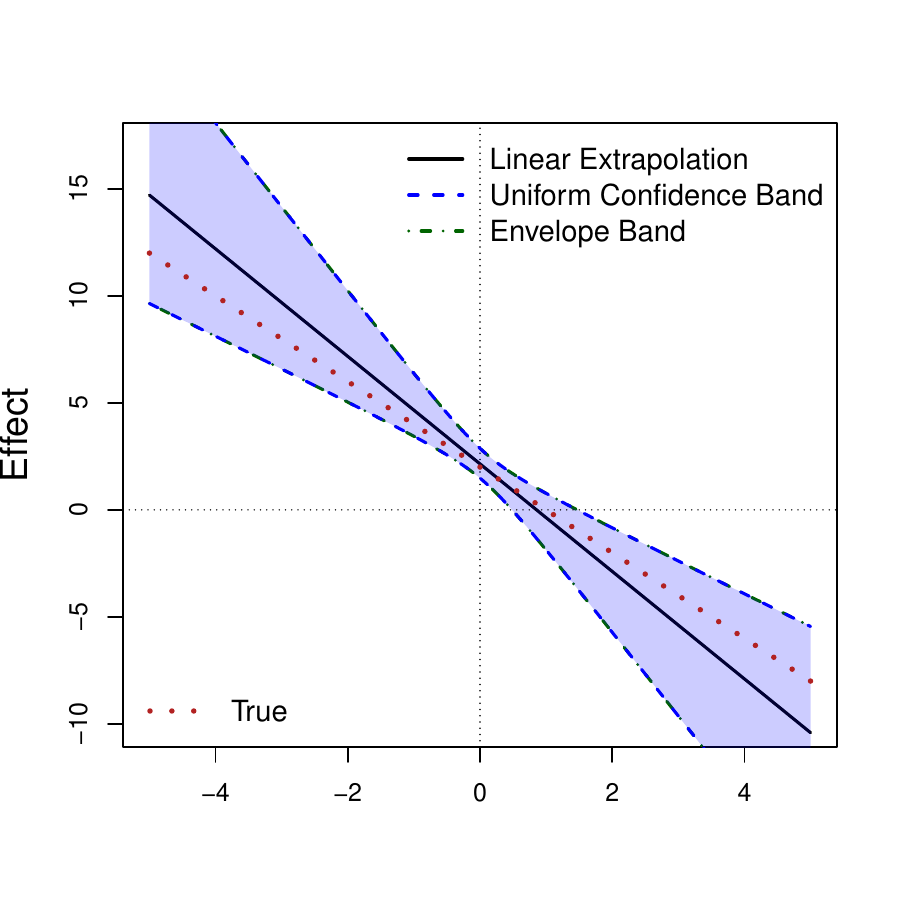}
        \caption{$\mathcal{U}_{0.95}$ ($\delta_1=\delta_2=5$)}
        \label{fig: simu3}
    \end{subfigure}

    \caption{Joint Confidence Region \& Confidence Band \eqref{eq: dgp2}}
    \label{fig: simulation}
    \end{center}

    \footnotesize
    \renewcommand{\baselineskip}{11pt}
    \textbf{Note:} Point Estimate and Linear Extrapolations are reported before bias correction, i.e., $(\hat{\tau}_{\mathtt{SRD}} , \hat{\tau}_{\mathtt{SRD}}^\prime)$ and $\hat{\tau}_{\mathtt{SRD}} + \hat{\tau}_{\mathtt{SRD}}^\prime x$.
\end{figure}
As an illustration, Figure \ref{fig: simulation} presents a single-shot simulation result of \ref{eq: dgp2} with a sample size of $1000$.
Panel (a) shows the confidence region $\mathcal{R}_{0.95}$, which correctly covers the true set $(\tau_{\mathtt{SRD}}, \tau_{\mathtt{SRD}}^\prime) = (2,-2)$. From this picture, we can deduce that the treatment effect at the cutoff is positive, while it decreases as the running variable $X$ increases.
Panels (b) and (c) display the uniform confidence band under the locally linear treatment effects assumption over $[-0.05, 0.05]$ and $[-5, 5]$, respectively. In both cases, the uniform confidence band contains the true treatment effect function indicated by the red dotted line. 
We also observe that the confidence band $\mathcal{U}_{0.95}$ converges to the envelope band $\mathcal{E}_{0.95}$ as $(\delta_1,\delta_2)$ increase.

\paragraph{Monte Carlo Results.}
\textcolor{black}{
We conduct a Monte Carlo experiment. The main results are reported for $n=1000$; results for $n=500$ are deferred to the Online Appendix and lead to the same qualitative conclusions. 
We consider three symmetric extrapolation regions, $\delta_1=\delta_2=\delta\in\{0.05, 1, 5\}$, ranging from a very local window to a wide extrapolation window.
Throughout, we consider local linear estimation ($p=1$) as the baseline, and we additionally report results for local quadratic estimation ($p=2$) to assess whether using a higher-order polynomial improves finite-sample performance in this setting.}

\textcolor{black}{
We compare several bandwidth selectors that correspond to the alternative criteria discussed in Section \ref{sec: remarks}.
First, we use the plug-in MSE-optimal bandwidth for the RD level effect, denoted by $\hat{h}_{\mathtt{mean}}$, which minimizes the estimated MSE of $\hat{\tau}_{\mathtt{SRD}}$ based on the first-order asymptotic approximation. 
Second, we use the plug-in MSE-optimal bandwidth for the TED, denoted by $\hat{h}_{\mathtt{deriv}}$, which minimizes the estimated MSE of $\hat{\tau}_{\mathtt{SRD}}^\prime$. Although this is the asymptotically MISE optimal choice for the linear extrapolation, we also consider an alternative MISE-oriented choice, $\hat{h}_{\mathtt{both}}$, obtained by minimizing $\widehat{\mathrm{MISE}}
= \widehat{\mathrm{MSE}}\left[\hat{\tau}_{\mathtt{SRD}}\right]\int_{-\delta}^{\delta} 1\,dx
+ \widehat{\mathrm{MSE}}\left[\hat{\tau}^\prime_{\mathtt{SRD}}\right]\int_{-\delta}^{\delta} x^2\,dx$, which may perform better in finite samples.
Finally, we report results using the regularized MSE-optimal bandwidths for $\hat{\tau}_{\mathtt{SRD}}$ and $\hat{\tau}_{\mathtt{SRD}}^\prime$, respectively denoted by $\hat{h}_{\mathtt{reg0}}$ and $\hat{h}_{\mathtt{reg1}}$, following \citet{Imbens_Kaluanaraman:2011} as implemented in the \texttt{rdrobust} package \citep{rdrobust2}. For bias correction, we use the same MSE-optimal bandwidth for bias estimation as that used by \texttt{rdrobust} throughout, regardless of the bandwidth selector used for point estimation.
}

\textcolor{black}{
We first discuss the results for local linear regression ($p=1$).
The left panel of Table \ref{tab: simulation} reports the empirical coverage probabilities of the proposed confidence region and band, as well as the MISE of the linearly extrapolated effects, based on $1{,}000$ Monte Carlo replications.
In all cases, the coverage probability of the joint confidence region $\mathcal{R}_{0.95}$ is excellent and close to the nominal level.
The coverage properties are insensitive to the bandwidth choice, reflecting the \textit{robust} nature of the RBC framework. Plus, no single bandwidth selector appears to uniformly dominate the others in terms of empirical coverage probabilities.
When the (locally) linear treatment effects assumption holds (\ref{eq: dgp1} and \ref{eq: dgp2}), the uniform confidence band $\mathcal{U}_{0.95}(x)$ also performs well and exhibits a similar coverage accuracy.
However, in \ref{eq: dgp3}, as this assumption is violated, $\mathcal{U}_{0.95}(x)$ exhibits undercoverage when the extrapolation interval is wide (i.e., when $\delta=5$). 
By contrast, when the extrapolation interval is relatively narrow (i.e., $\delta=1$ and $0.05$), the linearity assumption provides a reasonable approximation to the true treatment effects function, resulting in satisfactory coverage.
}

\textcolor{black}{
In terms of MISE, bandwidths that target the MISE criterion ($\hat{h}_{\mathtt{deriv}}$ and $\hat{h}_{\mathtt{both}}$) tend to outperform $\hat{h}_{\mathtt{mean}}$, especially for larger extrapolation windows (i.e., $\delta=1$ and $5$). 
Nevertheless, these differences are modest when translated into the uniformly averaged pointwise root MSE, $\sqrt{\mathrm{MISE}/(2\delta)}$ (ARMSE), and the RMSE at $x=\pm\delta$ (MSE is reported in the Online Appendix). For $\delta=1$ under \ref{eq: dgp1} and \ref{eq: dgp2}, the ARMSE gap between $\hat{h}_{\mathtt{mean}}$ and $\hat{h}_{\mathtt{both}}$ is only about $0.07$. 
For RMSE at $x=\pm 1$, the gap is $0.09$--$0.12$. 
Even for the larger window $\delta=5$, the difference is about $0.29$--$0.32$ in ARMSE and $0.48$--$0.56$ in RMSE, which does not appear substantial relative to the scale of the estimand (see Figure~\ref{fig: simu3}). 
The regularized bandwidths, $\hat{h}_{\mathtt{reg0}}$ and $\hat{h}_{\mathtt{reg1}}$, improve upon their non-regularized counterparts, perhaps due to the stabilizing effect of regularization. Interestingly, $\hat{h}_{\mathtt{reg0}}$ performs best in terms of finite-sample MISE in our simulations. One contributing factor is that $\hat{h}_{\mathtt{reg0}}$ yields a smaller $\mathrm{MSE}[\hat{\tau}_{\mathtt{SRD}}^\prime]$ than $\hat{h}_{\mathtt{reg1}}$ in our setup. 
Even so, the resulting differences in ARMSE and RMSE between the two remain limited (e.g., the ARMSE gap is around $0.05$ for $\delta=1$).
Overall, both coverage and MISE appear relatively insensitive to the choice of bandwidth selector. That said, given its consistently better MISE performance, we recommend using the usual \texttt{rdrobust} bandwidth $\hat{h}_{\mathtt{reg0}}$ as the default choice.
}

\textcolor{black}{
The right panel of Table \ref{tab: simulation} shows the simulation results for the local quadratic estimation ($p=2$).
Although main implications remain, the coverage property and MISE both become worse compared to the local linear cases. Hence, except for the case when the sample size is extremely large, the local linear specification would be a better option.
}
\begin{landscape}
\begin{table}
    \centering
    \begin{tabular}{lccccccccccc}\hline\hline
         & \multicolumn{5}{c}{Local Linear ($p=1$)} & & \multicolumn{5}{c}{Local Quadratic ($p=2$)}\\\cline{2-6}\cline{8-12}
         & $\hat{h}_{\mathtt{mean}}$ & $\hat{h}_{\mathtt{deriv}}$ & $\hat{h}_{\mathtt{both}}$ & $\hat{h}_{\mathtt{reg0}}$ & $\hat{h}_{\mathtt{reg1}}$ & & $\hat{h}_{\mathtt{mean}}$ & $\hat{h}_{\mathtt{deriv}}$ & $\hat{h}_{\mathtt{both}}$ & $\hat{h}_{\mathtt{reg0}}$ & $\hat{h}_{\mathtt{reg1}}$\\\hline
        \multicolumn{4}{l}{\ref{eq: dgp1}: Coverage}\\
        \hspace{0.5cm}$\mathcal{R}_{0.95}$ & 0.943 & 0.944 & 0.942 & 0.936 & 0.945 & & 0.926 & 0.928 & 0.927 & 0.931 & 0.934 \\
        \hspace{0.5cm}$\mathcal{U}_{0.95}$ ($\delta=0.05$) & 0.940 & 0.944 & 0.941 & 0.936 & 0.945 & & 0.928 & 0.929 & 0.927 & 0.924 & 0.926 \\
        \hspace{0.5cm}$\mathcal{U}_{0.95}$ ($\delta=1$) & 0.942 & 0.942 & 0.939 & 0.935 & 0.945 & & 0.927 & 0.928 & 0.924 & 0.930 & 0.935 \\
        \hspace{0.5cm}$\mathcal{U}_{0.95}$ ($\delta=5$) & 0.943 & 0.944 & 0.945 & 0.936 & 0.945 & & 0.926 & 0.928 & 0.927 & 0.931 & 0.934 \\
        \multicolumn{4}{l}{\ref{eq: dgp1}: MISE}\\
        \hspace{0.5cm}$\delta=0.05$ & 0.013 & 0.016 & 0.013 & 0.010 & 0.011 & & 0.047 & 0.056 & 0.046 & 0.017 & 0.017 \\
        \hspace{0.5cm}$\delta=1$ & 0.945 & 0.868 & 0.754 & 0.641 & 0.756 & & 3.725 & 3.160 & 2.623 & 1.855 & 2.108 \\
        \hspace{0.5cm}$\delta=5$ & 85.93 & 70.44 & 67.97 & 55.59 & 68.69 & & 353.6 & 260.1 & 249.6 & 191.6 & 222.6 \\\hline
        
        \multicolumn{4}{l}{\ref{eq: dgp2}: Coverage}\\
        \hspace{0.5cm}$\mathcal{R}_{0.95}$ & 0.944 & 0.942 & 0.943 & 0.940 & 0.940 & & 0.933 & 0.938 & 0.932 & 0.931 & 0.929 \\
        \hspace{0.5cm}$\mathcal{U}_{0.95}$ ($\delta=0.05$) & 0.937 & 0.940 & 0.937 & 0.934 & 0.931 & & 0.927 & 0.935 & 0.926 & 0.930 & 0.925 \\
        \hspace{0.5cm}$\mathcal{U}_{0.95}$ ($\delta=1$) & 0.944 & 0.939 & 0.939 & 0.939 & 0.938 & & 0.932 & 0.936 & 0.936 & 0.927 & 0.927 \\
        \hspace{0.5cm}$\mathcal{U}_{0.95}$ ($\delta=5$) & 0.944 & 0.942 & 0.945 & 0.940 & 0.940 & & 0.933 & 0.938 & 0.937 & 0.931 & 0.929 \\
        \multicolumn{4}{l}{\ref{eq: dgp2}: MISE}\\
        \hspace{0.5cm}$\delta=0.05$ & 0.013 & 0.015 & 0.013 & 0.010 & 0.011 & & 0.037 & 0.051 & 0.036 & 0.016 & 0.017 \\
        \hspace{0.5cm}$\delta=1$ & 0.883 & 0.815 & 0.718 & 0.637 & 0.745 & & 2.999 & 2.840 & 2.222 & 1.679 & 1.862 \\
        \hspace{0.5cm}$\delta=5$ & 79.75 & 67.16 & 64.35 & 55.49 & 67.65 & & 287.2 & 232.5 & 221.5 & 170.9 & 193.1 \\\hline
        
        \multicolumn{4}{l}{\ref{eq: dgp3}: Coverage}\\
        \hspace{0.5cm}$\mathcal{R}_{0.95}$ & 0.937 & 0.945 & 0.937 & 0.942 & 0.946 & & 0.929 & 0.925 & 0.925 & 0.923 & 0.927 \\
        \hspace{0.5cm}$\mathcal{U}_{0.95}$ ($\delta=0.05$) & 0.942 & 0.940 & 0.942 & 0.943 & 0.941 & & 0.925 & 0.925 & 0.923 & 0.928 & 0.926 \\
        \hspace{0.5cm}$\mathcal{U}_{0.95}$ ($\delta=1$) & 0.939 & 0.940 & 0.936 & 0.938 & 0.945 & & 0.928 & 0.924 & 0.926 & 0.925 & 0.928 \\
        \hspace{0.5cm}$\mathcal{U}_{0.95}$ ($\delta=5$) & 0.890 & 0.884 & 0.881 & 0.868 & 0.904 & & 0.918 & 0.907 & 0.907 & 0.912 & 0.914 \\
        \multicolumn{4}{l}{\ref{eq: dgp3}: MISE}\\
        \hspace{0.5cm}$\delta=0.05$ & 0.014 & 0.011 & 0.014 & 0.010 & 0.011 & & 0.045 & 0.050 & 0.044 & 0.017 & 0.017 \\
        \hspace{0.5cm}$\delta=1$ & 0.980 & 0.742 & 0.673 & 0.635 & 0.764 & & 3.367 & 3.057 & 2.430 & 1.829 & 2.037 \\
        \hspace{0.5cm}$\delta=5$ & 101.5 & 77.82 & 75.55 & 68.33 & 82.31 & & 326.4 & 274.9 & 255.5 & 201.3 & 226.7 \\\hline
    \end{tabular}
    \caption{Simulation Results ($n=1000$)}
    \label{tab: simulation}
\end{table}
\end{landscape}

\section{Conclusion and Recommendations}\label{sec:conclusion}
Motivated by our survey suggesting that recent empirical RD studies pay little attention to external validity and that more accessible econometric tools are needed, this article proposes a simple procedure for jointly inferring the RD effect and its local external validity. Building on the insight of \cite{Calonico_etal:2014} and \cite{Dong_Lewbel:2015}, we proposed an RBC-based confidence region for the treatment effect and TED at the cutoff point. We then introduced a locally linear treatment effects assumption, which complements and extends the framework of \cite{Dong_Lewbel:2015} by allowing for more direct extrapolation analysis of RD effects. Under this assumption, we further derived a uniform confidence band for the linearly extrapolated effects.

\textcolor{black}{
In any RD setting, we recommend reporting both the standard RD effect at the cutoff, $\tau_{\mathtt{SRD}}$, and the treatment-effect derivative (TED), $\tau_{\mathtt{SRD}}^\prime$, of \citet{Dong_Lewbel:2015}.
Under mild smoothness conditions, the TED serves as a broadly applicable diagnostic for the local external validity of $\tau_{\mathtt{SRD}}$, and thus can be reported in virtually all RD studies.
Our confidence region $\mathcal{R}_{1-\alpha}$ provides a unified way to make this assessment with formal uncertainty quantification.
Moreover, our numerical results support the use of conventional bandwidth choices, so no additional tuning parameters or design modifications are required for valid inference.}

\textcolor{black}{
When treatment effects away from the cutoff are of substantive or policy interest, the choice of an extrapolation strategy should reflect the available design features and the credibility of any additional assumptions. In the absence of design features---as in our second application---our linear extrapolation approach is appealing because it requires nothing beyond the standard RD setup.
Even when multiple cutoffs are available, if the research interest lies to the left of the cutoff, our procedure can still be useful because the multi-cutoff extrapolation methods proposed in the literature are not applicable in that direction \citep{Cattaneo_etal:2021JASA_extrapolating, Okamoto_Ozaki:2025}. For example, the policy experiment in our first application (relaxing the eligibility threshold) requires extrapolation to the left.
By contrast, when the region to the right of the cutoff is of interest, all three approaches are, at least mechanically, applicable. Since each approach relies on certain functional-form assumptions, researchers should choose among them based on the plausibility of the corresponding identifying assumptions.
Relatedly, with many cutoffs, detailed counterfactual policy effects (in terms of treatment-assignment rules) can be studied using the methodology developed by \citet{Bertanha:2020}.
Beyond these multi-cutoff approaches, informative covariates can also help extrapolate treatment effects \citep{Angrist_Rokkanen:2015jasa, Bertanha_Imbens:2020jbes, Deaner_Kwon:2025}.}

\textcolor{black}{
Overall, we view joint estimation and inference for $(\tau_{\mathtt{SRD}},\tau_{\mathtt{SRD}}^\prime)$ as a natural default.
When extrapolated effects themselves are the target, one can then layer on additional design features or stronger identifying assumptions to draw more informative conclusions.}

\newpage
%----------------------------------------------------------------%
%                      Supplement                                %
%----------------------------------------------------------------%
\setcounter{section}{0}
\setcounter{page}{1}
\renewcommand{\thepage}{S\arabic{page}}
\setcounter{equation}{0}
\renewcommand{\theequation}{S.\arabic{equation}}
\setcounter{lemmax}{0}
\renewcommand{\thelemmax}{S\arabic{lemmax}}
\setcounter{theoremx}{0}
\renewcommand{\thetheoremx}{S\arabic{theoremx}}
\setcounter{corx}{0}
\renewcommand{\thetheoremx}{S\arabic{corx}}
\setcounter{table}{0}
\renewcommand{\thesection}{S\arabic{section}}
\setcounter{table}{0}
\renewcommand{\thetable}{S\arabic{table}}
\setcounter{figure}{0}
\renewcommand{\thefigure}{S\arabic{figure}}
\setcounter{remarkx}{0}
\renewcommand{\theremarkx}{S\arabic{remarkx}}

%\fontsize{12pt}{12pt}\selectfont

\makeatletter
\renewcommand*{\@fnsymbol}[1]{\ensuremath{\ifcase#1\or \flat\or * \else\@ctrerr\fi}}
\makeatother

\begin{center}
{\Large\bf Online Appendix for ``Joint Inference for the Regression Discontinuity Effect and Its External Validity"}
\end{center}
\begin{center}
    {\large Yuta Okamoto}
\end{center}

\section{Proofs and Generalization}\label{sec: proofs}
\subsection{Proofs of Lemma 1 and Proposition 1}\label{sec:proof1}
This subsection provides proofs of Lemma 1 and Proposition 1 in a more general setting, where we employ $p(\geq1)$-th order local polynomial regression for the main estimation and use local polynomial regression of order $q(\geq p+1)$ for the bias estimation.
As the matrix notation is more convenient, we begin by introducing some additional notations.
Our $p$-th order local polynomial estimators are given by $\hat{\mu}_+ = e_0^\top \hat{\beta}_{+,p}$ and $\hat{\mu}_+^\prime = e_1^\top \hat{\beta}_{+,p}$, where $e_0 \coloneqq (1,0,\ldots,0)^\top \in \mathbb{R}^{p+1}$, $e_1 \coloneqq (0,1,0,\ldots,0)^\top \in \mathbb{R}^{p+1}$,
\begin{align*}
    \hat{\beta}_{+,p}
    \coloneqq \argmin_{\beta \in\mathbb{R}^{p+1}} \sum_{i=1}^{n}\mathbf{1}\left\{X_i \geq 0\right\} \left(Y_i - r_p(X_i)^\top \beta\right)^2 \frac{1}{h}K\left(\frac{X_i}{h}\right),
\end{align*}
and $r_p(u) \coloneqq (1, u, \ldots, u^p)^\top$.
$\hat{\mu}_{-}$ and $\hat{\mu}_{-}^\prime$ is defined similarly.
Note that $\hat{\mu}_{+}$ and $\hat{\mu}_{+}^\prime$ have the following closed form expression:
\begin{align*}
    \hat{\mu}_{+} = \frac{1}{n} e_0^\top \Gamma_+^{-1} R^\top W_+ Y,\quad
    \hat{\mu}_{+}^\prime = \frac{1}{n}\frac{1}{h} e_1^\top \Gamma_+^{-1} R^\top W_+ Y,
\end{align*}
where $Y \coloneqq (Y_1,\ldots,Y_n)^\top$, $W_+ \coloneqq \mathrm{diag}(\mathbf{1}\{X_1\geq0\}K(X_1/h)/h,\ldots, \mathbf{1}\{X_n\geq0\}K(X_n/h)/h)$, $R \coloneqq (r_p(X_1/h),\ldots,r_p(X_n/h))^\top$, and $\Gamma_+ \coloneqq R^\top W_+ R / n$. The counterparts replacing $+$ with $-$ are defined similarly.
We will focus on the $+$ case as long as the $-$ case can be treated analogously below.

We assume $n \min\{h,b\}\to\infty$, $n \min\{h^{2(p+1)+1}, b^{2(p+1)+1}\} \max\{h^2, b^{2(q-p)}\}\to0$, and $\max\{h,b\}<\kappa_0$ throughout.
We also assume Assumptions 1-2 hold with $S\geq q+1 \geq p+2$.

The usual Taylor expansion argument suggests that
\begin{align*}
    \E{\hat{\mu}_+ | X_1,\ldots,X_n} - {\mu}_+ &\approx h^{p+1}\frac{\mu_+^{(p+1)}}{(p+1)!}e_0^\top\Gamma_+^{-1}\vartheta_+ \eqqcolon h^{p+1}\frac{\mu_+^{(p+1)}}{(p+1)!} \mathcal{B}_+\\
    \E{\hat{\mu}_+^\prime | X_1,\ldots,X_n} - {\mu}_+^\prime &\approx h^{p}\frac{\mu_+^{(p+1)}}{(p+1)!}e_1^\top\Gamma_+^{-1}\vartheta_+\eqqcolon h^{p}\frac{\mu_+^{(p+1)}}{(p+1)!} \mathcal{B}_+^\prime,
\end{align*}
where $\vartheta_+ \coloneqq R^\top W_+ S/n$ and $S\coloneqq ((X_1/h)^{p+1},\ldots, (X_n/h)^{p+1})^\top$.
We use the notations $\mathcal{B}_+$ and $\mathcal{B}_+^\prime$, with a slight abuse of notation, to keep the exposition intuitive.
This motivates our bias-corrected estimators:
\begin{align*}
    \tilde{\mu}_+ \coloneqq \hat{\mu}_+ - h^{p+1}\frac{\hat{\mu}_+^{(p+1)}}{(p+1)!} \mathcal{B}_+ ,\,\,\text{and}\,\,
    \tilde{\mu}_+^\prime \coloneqq \hat{\mu}_+^\prime - h^{p}\frac{\hat{\mu}_+^{(p+1)}}{(p+1)!} \mathcal{B}_+^\prime ,
\end{align*}
where $\hat{\mu}_+^{(p+1)}$ are constructed using the $q$-th order local polynomial regression:
\begin{align*}
    \hat{\mu}_+^{(p+1)} \coloneqq (p+1)!\frac{1}{b^{p+1}} e_{p+1}^\top \check{\Gamma}_+^{-1} \check{R}^\top \check{W}_+ Y/n,
\end{align*}
where $e_{p+1}\in\mathbb{R}^{q+1}$ is the $(p+2)$-th standard basis vector, and the checked quantities (e.g., $\check{\Gamma}_+$) are defined by replacing $p$ with $q$ and $h$ with $b$.
Then, Lemma S.A.4 of \cite{Calonico_etal:2014} tells us that $\V{\tilde{\mu}_+ | x_i,\ldots,X_n} = V_+(\Sigma_Y)$ and $\V{h\tilde{\mu}_+^\prime | x_i,\ldots,X_n} = V_+^\prime(\Sigma_Y)$ (with a slight abuse of notation), where
\begin{align*}
    %mean
    V_+(\Sigma) &\coloneqq \frac{1}{n}e_0^\top \Gamma_+^{-1} (R^\top W_+ \Sigma {W}_+ {R} /n) \Gamma_+^{-1}e_0\\
    &\quad+h^{2(p+1)} \frac{1}{nb^{{2(p+1)}}}(p+1)!^2 e_{p+1}^\top \check{\Gamma}_+^{-1} (\check{R}^\top \check{W}_+ \Sigma \check{W}_+ \check{R} /n) \check{\Gamma}_+^{-1}e_{p+1} \frac{\mathcal{B}_+^2}{(p+1)!^2}\\
    &\quad-2 h^{p+1}\frac{1}{nb^{p+1}}(p+1)! e_0^\top \Gamma_+^{-1} (R^\top W_+ \Sigma \check{W}_+ \check{R} /n)\check{\Gamma}_+^{-1} e_{p+1}\frac{\mathcal{B}_+}{(p+1)!}\\
    &=
    \frac{1}{n}e_0^\top \Gamma_+^{-1} (R^\top W_+ \Sigma {W}_+ {R} /n) \Gamma_+^{-1}e_0\\
    &\quad+ \frac{h^{2(p+1)}}{nb^{2(p+1)}} e_{p+1}^\top \check{\Gamma}_+^{-1} (\check{R}^\top \check{W}_+ \Sigma \check{W}_+ \check{R} /n) \check{\Gamma}_+^{-1}e_{p+1} \mathcal{B}_+^2\\
    &\quad-2 \frac{h^{p+1}}{nb^{p+1}} e_0^\top \Gamma_+^{-1} (R^\top W_+ \Sigma \check{W}_+ \check{R} /n)\check{\Gamma}_+^{-1} e_{p+1} \mathcal{B}_+,\\
    % derivative
    V_+^\prime(\Sigma) &\coloneqq \frac{1}{n}e_1^\top \Gamma_+^{-1} (R^\top W_+ \Sigma {W}_+ {R} /n) \Gamma_+^{-1}e_1\\
    &\quad+ \frac{h^{2(p+1)}}{nb^{2(p+1)}} e_{p+1}^\top \check{\Gamma}_+^{-1} (\check{R}^\top \check{W}_+ \Sigma \check{W}_+ \check{R} /n) \check{\Gamma}_+^{-1}e_{p+1} (\mathcal{B}_+^\prime)^2\\
    &\quad-2 \frac{h^{p+1}}{nb^{p+1}} e_1^\top \Gamma_+^{-1} (R^\top W_+ \Sigma \check{W}_+ \check{R} /n)\check{\Gamma}_+^{-1} e_{p+1} \mathcal{B}_+^\prime,
\end{align*}
and $\Sigma_Y \coloneqq \mathrm{diag}(\V{Y_i|X_1},\ldots,\V{Y_i|X_n})$.
Note that the only unknown quantity is $\Sigma_Y$.

We can derive a similar representation for the covariance. Note that
\begin{align*}
    &\C{\tilde{\mu}_+, h\tilde{\mu}_+^\prime | X_1,\ldots, X_n}=\C{\hat{\mu}_+ - h^{p+1} \hat{B}_+, h\left(\hat{\mu}_+^\prime - h^{p} \hat{B}_+^\prime\right) | X_1,\ldots, X_n}\\
    &= h\C{\hat{\mu}_+, \hat{\mu}_+^\prime | X_1,\ldots, X_n} 
    -h^{p+1}\C{\hat{\mu}_+, \hat{B}_+^\prime | X_1,\ldots, X_n} \\
    &\quad-h^{p+2}\C{\hat{B}_+, \hat{\mu}_+^\prime | X_1,\ldots, X_n}
    +h^{2(p+1)}\C{\hat{B}_+, \hat{B}_+^\prime | X_1,\ldots, X_n},
\end{align*}
where $\hat{B}_+ \coloneqq \hat{\mu}_+^{(p+1)}/(p+1)! \times \mathcal{B}_+$ and $\hat{B}_+^\prime \coloneqq \hat{\mu}_+^{(p+1)}/(p+1)! \times \mathcal{B}_+^\prime$.
Then we can show that $\C{\tilde{\mu}_+, h\tilde{\mu}_+^\prime | X_1,\ldots, X_n} = C_+(\Sigma_Y)$, where
\begin{align*}
    C_+(\Sigma) &\coloneqq \frac{1}{n}e_0^\top \Gamma_+^{-1} (R^\top W_+ \Sigma {W}_+ {R} /n) \Gamma_+^{-1}e_1\\
    &\quad - \frac{h^{p+1}}{n b^{p+1}} e_0^\top \Gamma_+^{-1} (R^\top W_+ \Sigma \check{W}_+ \check{R} /n) \check{\Gamma}_+^{-1} e_{p+1} \mathcal{B}_+^\prime \\
    &\quad - \frac{h^{p+1}}{n b^{p+1}} e_1^\top \Gamma_+^{-1} (R^\top W_+ \Sigma \check{W}_+ \check{R} /n) \check{\Gamma}_+^{-1} e_{p+1} \mathcal{B}_+\\
    &\quad+ \frac{h^{2(p+1)}}{n b^{2(p+1)}} e_{p+1}^\top \check{\Gamma}_+^{-1} (\check{R}^\top \check{W}_+ \Sigma \check{W}_+ \check{R} /n) \check{\Gamma}_+^{-1}e_{p+1} \mathcal{B}_+\mathcal{B}_+^\prime.
\end{align*}
Then, by the Cramer-Wold device, it suffices to show that 
\begin{align*}
    \frac{a_0 \left(\tilde{\mu}_+ - {\mu}_+\right) + a_1 h \left(\tilde{\mu}_+^\prime - {\mu}_+^\prime\right)}{\sqrt{a_0^2 V_+ + 2a_0 a_1 C_+ + a_1^2 V_+^\prime}}
    \to_d \mathcal{N}\left(0, 1 \right)
\end{align*}
for any $(a_0,a_1)^\top\in\mathbb{R}^2\backslash\{(0,0)^\top\}$. 
Note that the left-hand side is well defined because $\Omega$ is nonsingular by assumption.
First, the same manipulation as Lemma S.A.4 of \citet[p.~16 of Supplement]{Calonico_etal:2014} shows that
\begin{align*}
    &\frac{a_0 \left(\tilde{\mu}_+ - {\mu}_+\right) + a_1 h \left(\tilde{\mu}_+^\prime - {\mu}_+^\prime\right) }{\sqrt{a_0^2 V_+ + 2a_0 a_1 C_+ + a_1^2 V_+^\prime}} \\
    &\quad= 
    \underbrace{\frac{a_0 \left(\tilde{\mu}_+ - \E{\tilde{\mu}_+ | X_1,\ldots,X_n}\right) + a_1 h \left(\tilde{\mu}_+^\prime - \E{\tilde{\mu}_+^\prime | X_1,\ldots,X_n}\right) }{\sqrt{a_0^2 V_+ + 2a_0 a_1 C_+ + a_1^2 V_+^\prime}}}_{\eqqcolon \xi} + o_p(1).
\end{align*}
Second, we will show below that $\xi = \xi_* + o_p(1)$, where $\xi_* \coloneqq \sum_{i=1}^{n}\omega_{n,i} \varepsilon_i$, $\omega_{n,i} \coloneqq (a_0\omega_{0,n,i} + a_1 h \omega_{1,n,i})/\sqrt{\omega_{2,n}}$, $\varepsilon_i = Y_i - \mu(X_i)$, $\mu(X)\coloneqq \E{Y|X}$,
\begin{align*}
    \omega_{0,n,i} &\coloneqq e_0^\top \Gamma_{+,\star}^{-1} \frac{\mathbf{1}_i}{nh}K\left(\frac{X_i}{h}\right) r_p\left(\frac{X_i}{h}\right)
    - \frac{h^{p+1}}{b^{p+1}}e_0^\top \Gamma_{+,\star}^{-1} \vartheta_{+,\star} e_{p+1}^\top \check{\Gamma}_{+,\star}^{-1} \frac{\mathbf{1}_i}{nb} K\left(\frac{X_i}{b}\right) r_q\left(\frac{X_i}{b}\right),\\
    \omega_{1,n,i} &\coloneqq 
    e_1^\top \Gamma_{+,\star}^{-1} \frac{\mathbf{1}_i}{nh^2} K\left(\frac{X_i}{h}\right) r_p\left(\frac{X_i}{h}\right)
    - \frac{h^{p}}{b^{p+1}}e_1^\top \Gamma_{+,\star}^{-1} \vartheta_{+,\star} e_{p+1}^\top \check{\Gamma}_{+,\star}^{-1} \frac{\mathbf{1}_i}{nb} K\left(\frac{X_i}{b}\right) r_q\left(\frac{X_i}{b}\right),
\end{align*}
$\mathbf{1}_i \coloneqq \mathbf{1}\{X_i\geq0\}$, $\Gamma_{+,\star} \coloneqq \int_0^1 K(u) r_p(u) r_p(u)^\top f(uh)\,du$, $\check{\Gamma}_{+,\star} \coloneqq \int_0^1 K(u) r_q(u) r_q(u)^\top f(ub)\,du$, and $\vartheta_{+,\star} \coloneqq \int_0^1 K(u) r_p(u) u^{p+1} f(uh)\,du$; and 
\begin{align*}
    \omega_{2,n} &= \sum_{i=1}^{n} \E{(a_0\omega_{0,n,i} + a_1 h \omega_{1,n,i})^2 \varepsilon_i^2}.
\end{align*}
To see this, first observe that Lemma S.A.4 of \citet[p.~17 of Supplement]{Calonico_etal:2014} implies that
\begin{align*}
    &a_0 \left(\tilde{\mu}_+ - \E{\tilde{\mu}_+ | X_1,\ldots,X_n}\right) + a_1 h \left(\tilde{\mu}_+^\prime - \E{\tilde{\mu}_+^\prime | X_1,\ldots,X_n} \right)\\
    &\quad= \sum_{i=1}^{n}(a_0\omega_{0,n,i} + a_1 h \omega_{1,n,i}) \varepsilon_i + o_p\left((a_0^2 V_+ + 2a_0 a_1 C_+ + a_1^2 V_+^\prime)^{1/2}\right).
\end{align*}
Further, we have that
\begin{align}
    a_0^2 V_+ + 2a_0 a_1 C_+ + a_1^2 V_+^\prime &= 
    \underbrace{\sum_{i=1}^{n} (a_0\omega_{0,n,i} + a_1 h \omega_{1,n,i})^2 \E{\varepsilon_i^2 \mid X_i}}_{\eqqcolon S_n}(1+o_p(1))\notag\\
    &=\sum_{i=1}^{n}\E{(a_0\omega_{0,n,i} + a_1 h \omega_{1,n,i})^2\varepsilon_i^2}(1+o_p(1)),\label{eq: nonsingular}
\end{align}
where the second equality uses $S_n = \E{S_n}(1+o_p(1))$, which follows from $\V{S_n}/\E{S_n}^2$ $=o_p(1)$ and the Chebyshev inequality.
Hence, we obtain that $\xi = \xi_* + o_p(1)$.
%Now, note that the convergence rates of $\mathbb{E}[\omega_{0,n,i}^2 \varepsilon_i^2]$, $\mathbb{E}[(h\omega_{1,n,i})^2 \varepsilon_i^2]$, and $\mathbb{E}[(a_0\omega_{0,n,i} + a_1 h \omega_{1,n,i})^2 \varepsilon_i^2]$ are balanced.
Again, using Lemma S.A.4 of \cite{Calonico_etal:2014}, we obtain that $\sum_{i=1}^n \E{|\omega_{0,n,i} \varepsilon_i/\sqrt{\omega_{2,n}}|^4}=o(1)$ and that $\sum_{i=1}^n \E{|h\omega_{1,n,i} \varepsilon_i/\sqrt{\omega_{2,n}}|^4}=o(1)$.
Hence, using the elementary inequality $|x+y|^4\leq 8(|x|^4+|y|^4)$, we obtain that $\sum_{i=1}^n \E{|\omega_{n,i} \varepsilon_i|^4} = o(1)$. The Liapunov central limit theorem shows $\xi_*\to_d \mathcal{N}(0,1)$.
By the random sampling assumption, we obtain the desired result with $\mathrm{V}_{\mathtt{SRD}} (\Sigma_Y)\coloneqq V_+(\Sigma_Y) + V_-(\Sigma_Y)$, $\mathrm{V}_{\mathtt{SRD}}^\prime(\Sigma_Y) \coloneqq V_+^\prime(\Sigma_Y) + V_-^\prime(\Sigma_Y)$, and $\mathrm{C}_{\mathtt{SRD}} \coloneqq \mathrm{C}_{\mathtt{SRD}}(\Sigma_Y) = C_+(\Sigma_Y) + C_-(\Sigma_Y)$. 

Finally, we verify that $\Omega$ is asymptotically invertible.
Recalling \eqref{eq: nonsingular}, we have that $a_0^2 V_+ + 2a_0 a_1 C_+ + a_1^2 V_+^\prime = \sum_{i=1}^{n}\E{(a_0\omega_{0,n,i} + a_1 h \omega_{1,n,i})^2 \sigma_+^2(X_i)}(1 + o_p(1))$. By Assumption 1-(iii), it suffices to see that $a_0\omega_{0,n,i} + a_1 h \omega_{1,n,i}$ is not identically zero. To see this, note that
\begin{align*}
    a_0\omega_{0,n,i} + a_1 h \omega_{1,n,i} 
    = \alpha_0^\top \frac{\mathbf{1}_i}{nh}K\left(\frac{X_i}{h}\right) r_p\left(\frac{X_i}{h}\right) -
    \alpha_1^\top \frac{h^{p+1}}{b^{p+1}} \frac{\mathbf{1}_i}{nb} K\left(\frac{X_i}{b}\right) r_q\left(\frac{X_i}{b}\right)
\end{align*}
where $\alpha_0^\top \coloneqq (a_0 e_0^\top + a_1 e_1^\top)\Gamma_{+,\star}^{-1}$ and $\alpha_1^\top \coloneqq \alpha_0^\top \vartheta_{+,\star} e_{p+1}^\top \check{\Gamma}_{+,\star}^{-1}$.
We first consider the case where $h=b$.
Suppose, toward a contradiction, that the left-hand side is identically zero. Then, the $(p+2)$-th element of $\alpha_1$, i.e., $\alpha_1^\top e_{p+1}$, has to be zero. However, as the matrix $\check{\Gamma}_{+,\star}$ is positive definite, $e_{p+1}^\top \check{\Gamma}_{+,\star}^{-1}e_{p+1}>0$ holds, which in turn implies that $\alpha_0^\top \vartheta_{+,\star} =0$. Hence, $\alpha_1 = (0,\ldots,0)^\top$, thereby $\alpha_0 = (0,\ldots,0)^\top$ should hold. Since $\Gamma_{+,\star}^{-1}$ is invertible, this holds only when $(a_0, a_1)= (0,0)$, which contradicts our assumption. When $h/b\to\rho$ with $\rho<1$, noting that $K(x/h)=0$ on $x\in(h,b]$ for large $n$ enough, we can show that $\alpha_1=(0,\ldots,0)^\top$, thereby $\alpha_0=(0,\ldots,0)^\top$. The case where $\rho>1$ is similar. Hence, we obtaint the desired result.

\begin{proof}[Proof of Lemma 1]
    This lemma is a special case of the argument above.
\end{proof}
\begin{proof}[Proof of Proposition 1]
    This proposition is a corollary to the previous lemma.
\end{proof}

\subsection{Proofs of Lemmas 2, 3, and Proposition 2}
\begin{proof}[Proof of Lemma 2]
    Under the differentiability condition, $\tau_{\mathtt{SRD}}$ and $\tau_{\mathtt{SRD}}^\prime$ are both identified (\citealp[Theorem 1]{Dong_Lewbel:2015}).
    Then, a linear function over an interval including $x=0$ is identified.
    Hence, Assumption 3 implies the identification.
\end{proof}

\begin{proof}[Proof of Proposition 2]
    Lemma 1 have shown that $\tilde{\Delta}(\tau_{\mathtt{SRD}}, \tau_{\mathtt{SRD}}^\prime)\to_d \mathcal{N}((0,0)^\top, \Omega_h)$.
    Then, we have that
    \begin{align*}
        \frac{(1,x)\tilde{\Delta}(\tau_{\mathtt{SRD}}, \tau_{\mathtt{SRD}}^\prime)}{\sqrt{(1,x){\Omega}_h (1,x)^\top}}\to_d
        v(x)^\top Z,\,\text{ where }\, Z\sim\mathcal{N}\left((0,0)^\top, \mathrm{diag}(1,1)
        \right)
    \end{align*}
    and $v(x) \coloneqq {\Omega}_h^{1/2}(1,x)^\top/||{\Omega}_h^{1/2}(1,x)^\top||$.
    We are interested in the limit behavior of the supremum of the left-hand side, $\sup_{x\in[-\delta_1,\delta_2]}|{(1,x)\tilde{\Delta}(\tau_{\mathtt{SRD}}, \tau_{\mathtt{SRD}}^\prime)}/{\sqrt{(1,x){\Omega}_h (1,x)^\top}}|$. By the continuous mapping, its limit is $S\coloneqq\sup_{x\in[-\delta_1, \delta_2]} |v(x)^\top Z|$.
    By transforming to polar coordinates, we can decompose the normal variable $Z$ as $Z = R(\cos \phi, \sin \phi)^\top$, where $R$ follows the standard Rayleigh distribution, $\phi\sim \mathrm{Uniform}[0,2\pi)$, and $R\indep \phi$. Hence, we have $\sup_{x\in[-\delta_1, \delta_2]} |v(x)^\top Z| = R \cdot \sup_{x\in[-\delta_1, \delta_2]}|v(x)^\top (\cos \phi, \sin \phi)^\top|$. 
    
    Now, recalling that the definition of $v(x)$, it is on the unit circle. 
    To study the direction of $v(x)$, we define the non-normalized version $u(x) \coloneqq {\Omega}_h^{1/2}(1,x)^\top$, and identify $\mathbb{R}^2$ with the complex plane so that we examine the argument of the complex number $u_1(x) + i u_2(x)$, where $u(x) = (u_1(x), u_2(x))^\top$.
    By construction, there exist real constants $a_1, a_2, b_1, b_2$ such that $u(x)=(a_1 + b_1 x, a_2 + b_2 x)^\top$.
    Then we compute:
    \begin{align*}
        \frac{\partial}{\partial x} \arg\left( u_1(x) + iu_2(x) \right)
        = \frac{a_1 b_2 - a_2 b_1}
                {(a_1 + b_1 x)^2 + (a_2 + b_2 x)^2}>0,
    \end{align*}
    since $a_1 b_2 - a_2 b_1 = \det({\Omega}_h^{1/2}) > 0$ by the regularity of ${\Omega}_h$. Hence, the argument of $u(x)$ is strictly increasing in $x$, and so is $v(x)$.
    Therefore, as $x$ ranges over $[-\delta_1, \delta_2]$, $v(x)$ traces out an arc of the unit circle.
    Let $\theta(x)$ denote this argument, and it is easy to see that there exists some closed interval $[\theta_l, \theta_u]$ such that $\theta(x) \in [\theta_l, \theta_u]$, $\theta_u - \theta_l \leq \pi$, and the boundary points are attained at $x=-\delta_1$ or $x=\delta_2$.
    Then, we can rewrite that
    \begin{align*}
        S&=R \cdot \sup_{x\in[-\delta_1, \delta_2]}|v(x)^\top (\cos \phi, \sin \phi)^\top| \\
        &= R \cdot \sup_{x\in[-\delta_1, \delta_2]}|\cos(\theta(x) - \phi)|\\
        &= R\cdot \sup_{\theta\in [\theta_l, \theta_u]}|\cos(\theta - \phi)|,
    \end{align*}
    where the second equality uses the cosine addition formula.
    The distribution of the last term is equal to that of $R\cdot \sup_{\theta\in [0, \theta_u - \theta_l]}|\cos(\theta - \phi)|$, whose distribution is again equivalent to the distribution of $R\cdot \sup_{\theta\in [0, \theta_u - \theta_l]}|\cos(\theta - \varphi)|$, where $\varphi\sim \mathrm{Uniform}[0,\pi]$.
    Then, we can further rewrite this as
    \begin{align*}
        R\cdot\begin{cases}
            1 & \text{ if } \varphi \in [0,\theta_u - \theta_l]\eqqcolon \mathrm{I}_1\\
            |\cos((\theta_u-\theta_l)-\varphi)| & \text{ if } \varphi \in [\theta_u - \theta_l, \pi/2 + (\theta_u - \theta_l)/2]\eqqcolon \mathrm{I}_2\\
            |\cos(-\varphi)| & \text{ if } \varphi \in [\pi/2 + (\theta_u - \theta_l)/2, \pi]\eqqcolon \mathrm{I}_3.
        \end{cases}
    \end{align*}
    Figure \ref{fig: proof} must be instructive to derive this. Again, by observing Figure \ref{fig: proof}, we finally obtain that
    \begin{align*}
        S &= R\cdot\begin{cases}
            1 & \text{ with probability } (\theta_u - \theta_l)/\pi\\
            \cos(U) & \text{ with probability } 1-(\theta_u - \theta_l)/\pi,
        \end{cases}\\
        U &\sim \mathrm{Uniform}[0,\pi/2 - (\theta_u - \theta_l)/2],\,\,R\indep U.
    \end{align*}
    \begin{figure}[t]
        \centering
        \includegraphics[width=0.8\linewidth]{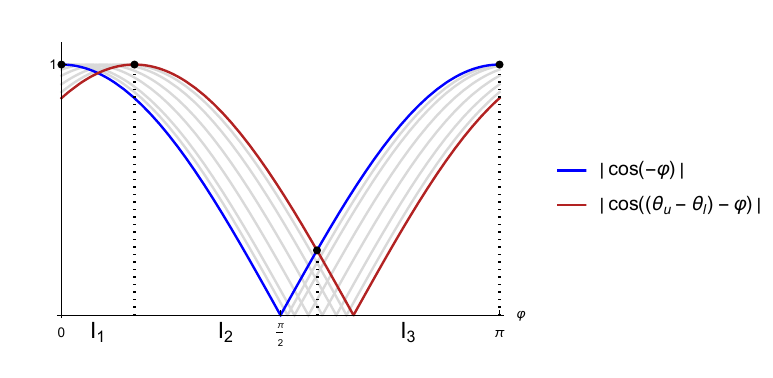}
        \caption{Illustration of $\mathrm{I}_1$, $\mathrm{I}_2$, and $\mathrm{I}_3$}
        \label{fig: proof}
    \end{figure}
    Hence, the distribution function of $S$ is given by
    \begin{align*}
        &\P{S\leq s}\\ &=
        \frac{\theta_u - \theta_l}{\pi}\P{R\leq s} 
        + \left(1-\frac{\theta_u - \theta_l}{\pi}\right) \int_0^{\pi/2 - (\theta_u - \theta_l)/2}\P{R\leq \frac{s}{\cos(u)}}\frac{2}{\pi-(\theta_u-\theta_l)}\,du\\
        &=\frac{\theta_u - \theta_l}{\pi} \left[1-\exp\left(-\frac{s^2}{2}\right)\right] + 
        \frac{2}{\pi}\int_0^{\pi/2 - (\theta_u - \theta_l)/2}1-\exp\left(-\frac{s^2}{2\cos(u)^2}\right)\,du.
    \end{align*}
    Thus, combined with the consistency of $\hat{\Omega}$, the critical value defined in the proposition provides asymptotically correct coverage.
\end{proof}

\begin{proof}[Proof of Lemma 3]
    If $[-\delta_1, \delta_2]\supseteq [-\delta_1^\prime, \delta_2^\prime]$, corresponding $\ell$ satisfies $\ell \geq \ell^\prime$. Then, it suffices to see that the critical value is nondecreasing in $\ell$.
    Rewrite
    \begin{align*}
        P(s;\ell)=\frac{\ell}{\pi} \left[1-\exp\left(-\frac{s^2}{2}\right)\right] + 
        \frac{2}{\pi}\int_0^{\pi/2 - \ell/2}1-\exp\left(-\frac{s^2}{2\cos(u)^2}\right)\,du.
    \end{align*}
    By the Leibniz rule, we can show that $\partial P(s;\ell) / \partial \ell\leq 0$. It is also straightforward to see that $\partial P(s;\ell) / \partial s > 0$ for $s>0$.
    Then, the implicit function theorem proves the statement.
\end{proof}

\subsection{Fuzzy RD Design}
Extension of our proposals to the fuzzy RD case is straightforward.
Suppose a random sample $(Y_i(0), Y_i(1), T_i(0), T_i(1), X_i)$.
Here, $T(0)$ and $T(1)$ denote the potential treatment indicator, and we observe $T_i = T_i(1)\mathbf{1}\{X_i \geq 0\} + T_i(0)\mathbf{1}\{X_i < 0\}$.
In fuzzy designs, our target parameters are $\tau_{\mathtt{FRD}}\coloneqq \E{Y_i(1) - Y_i(0) | X_i = 0, T_i(1)> T_i(0)}$ and its derivative with respect to $x$, $\tau_{\mathtt{FRD}}^\prime$.
Under Assumption A3 of \cite{Dong_Lewbel:2015}, these are identified as
\begin{align*}
\tau_{\mathtt{FRD}}
= \frac{\tau_Y}{\tau_T}
\coloneqq \frac{\mu_{Y,+}-\mu_{Y,-}}{\mu_{T,+}-\mu_{T,-}},\quad
\tau_{\mathtt{FRD}}^\prime
= 
\frac{\tau_Y^\prime}{\tau_T}-\frac{\tau_T^\prime}{\tau_T}\cdot\frac{\tau_Y}{\tau_T}
= \frac{\tau_Y^\prime\tau_T-\tau_Y\tau_T^\prime}{\tau_T^{2}},
\end{align*}
where $\mu_{A,+}\coloneqq \lim_{x\downarrow0}\mu_{A}(x)$, $\mu_{A,-}\coloneqq \lim_{x\uparrow0}\mu_{A}(x)$, and $\mu_A(x)\coloneqq \E{A_i | X_i=x}$, where $A$ equals either $Y$ or $T$.

In addition to the assumptions made in Lemma 1, we assume the following:
\begin{assumption}\label{assumption: fuzzy}
    $h\to0$. Furthermore, in a neighbourhood $(-\kappa_0,\kappa_0)$ around the cutoff:
    \item[(i)] $\mu_{T,+}(x)\coloneqq\E{T_i(1)|X_i=x}$ and $\mu_{T,-}(x)\coloneqq\E{T_i(0)|X_i=x}$ are $S$-times continuously differentiable.
    \item[(ii)] $\sigma_{T,+}^2(x)\coloneqq \V{T_i(1)|X_i=x}$ and $\sigma_{T,-}^2(x)\coloneqq \V{T_i(0)|X_i=x}$ are continuous and bounded away from zero.
\end{assumption}

Below, we suppose that the assumptions made in Section \ref{sec:proof1} and Assumption \ref{assumption: fuzzy} are satisfied.
Let $\hat{\tau}_{\mathtt{FRD}}$ and $\hat{\tau}_{\mathtt{FRD}}^\prime$ be based on the $p$-th order local polynomial estimators using the same bandwidth $h$ for every estimation step.
The bias-corrected estimator of ${\tau}_{\mathtt{FRD}}$ is given by (\citealp[Supplement, p.24]{Calonico_etal:2014})
\begin{align*}
    \tilde{\tau}_{\mathtt{FRD}} &= \hat{\tau}_{\mathtt{FRD}} - 
    h^{p+1}\left[\left\{\frac{1}{\hat{\tau}_{T}}\frac{\hat{\mu}_{Y,+}^{(p+1)}}{(p+1)!} - \frac{\hat{\tau}_{Y}}{\hat{\tau}_{T}^2}\frac{\hat{\mu}_{T,+}^{(p+1)}}{(p+1)!}\right\}\mathcal{B}_+
    -
    \left\{\frac{1}{\hat{\tau}_{T}}\frac{\hat{\mu}_{Y,-}^{(p+1)}}{(p+1)!} - \frac{\hat{\tau}_{Y}}{\hat{\tau}_{T}^2}\frac{\hat{\mu}_{T,-}^{(p+1)}}{(p+1)!}\right\}\mathcal{B}_-\right].
\end{align*}
Similarly, we can define the bias-corrected estimator of ${\tau}_{\mathtt{FRD}}^\prime$ by
\begin{align*}
    \tilde{\tau}_{\mathtt{FRD}}^\prime &= \hat{\tau}_{\mathtt{FRD}}^\prime - 
    h^{p}\left[\left\{\frac{1}{\hat{\tau}_{T}}\frac{\hat{\mu}_{Y,+}^{(p+1)}}{(p+1)!} - \frac{\hat{\tau}_{Y}}{\hat{\tau}_{T}^2}\frac{\hat{\mu}_{T,+}^{(p+1)}}{(p+1)!}\right\}\mathcal{B}_+^\prime
    -
    \left\{\frac{1}{\hat{\tau}_{T}}\frac{\hat{\mu}_{Y,-}^{(p+1)}}{(p+1)!} - \frac{\hat{\tau}_{Y}}{\hat{\tau}_{T}^2}\frac{\hat{\mu}_{T,-}^{(p+1)}}{(p+1)!}\right\}\mathcal{B}_-^\prime\right].
\end{align*}
It is useful to rewrite (\citealp[Supplement, p.~24]{Calonico_etal:2014}):
\begin{align*}
    \tilde{\tau}_{\mathtt{FRD}} - \tau_{\mathtt{FRD}} = 
    \underbrace{\frac{1}{\tau_{T}} \left(\tilde{\tau}_{Y} - \tau_Y\right)
    - \frac{\tau_{Y}}{\tau_{T}^2} \left(\tilde{\tau}_{T} - \tau_T\right)}_{\eqqcolon \mathring{\tau}_{\mathtt{FRD}}}
    + R_{n} - R_{n}^{\mathtt{bc}},
\end{align*}
where $R_{n}$ and $R_{n}^{\mathtt{bc}}$ are given in \citet[Supplement, p.~24]{Calonico_etal:2014} and are shown to be asymptotically negligible. 
Similarly, one can show that
\begin{align}
    \tilde{\tau}_{\mathtt{FRD}}^\prime - {\tau}_{\mathtt{FRD}}^\prime &= \underbrace{\frac{1}{\tau_T}\left(\tilde{\tau}_Y^\prime-\tau_Y^\prime\right)
    -\frac{\tau_Y}{\tau_T^2}\left(\tilde{\tau}_T^\prime-\tau_T^\prime\right)}_{\eqqcolon \mathring{\tau}_{\mathtt{FRD}}^\prime}
    + R_{n}^\prime - (R_{n}^\prime)^{\mathtt{bc}},\notag
\end{align}
where $(R_{n}^\prime)^{\mathtt{bc}}$ is analogously defined as $R_{n}^{\mathtt{bc}}$, and $R_{n}^\prime$ is given by
\begin{align*}
    R_{n}^\prime \coloneqq \frac{\hat{\tau}_Y^\prime}{\hat{\tau}_T{\tau}_T}\left({\tau}_T-\hat{\tau}_T\right) -
    \frac{\hat{\tau}_T^\prime\hat{\tau}_Y}{\hat{\tau}_T^{2}\tau_T^{2}}\,(\tau_T^{2}-\hat{\tau}_T^{2}) -
    \frac{\hat{\tau}_T^\prime}{\tau_T^{2}}(\hat{\tau}_Y-\tau_Y)
\end{align*}
Then, $(R_{n}^\prime)^{\mathtt{bc}}$ and $R_{n}^\prime$ can be shown to be of smaller order analogously.
The conditional variance of $\mathring{\tau}_{\mathtt{FRD}}$ is derived by \citet[Theorem A.2]{Calonico_etal:2014}, which is written by $\mathrm{V}_{\mathtt{FRD}}$:
\begin{align*}
    \mathrm{V}_{\mathtt{FRD}} \coloneqq 
    \frac{1}{\tau_T^2}\mathrm{V}_{\mathtt{SRD}}(\Sigma_Y)
    -2\frac{\tau_Y}{\tau_T^3}\mathrm{V}_{\mathtt{SRD}}(\Sigma_{Y,T})
    +\frac{\tau_Y^2}{\tau_T^4}\mathrm{V}_{\mathtt{SRD}}(\Sigma_T)
\end{align*}
where $\Sigma_{Y,T} \coloneqq \mathrm{diag}(\C{Y_i, T_i | X_i}, \ldots, \C{Y_i, T_i | X_n})$ and $\Sigma_T$ is defined analogously to $\Sigma_Y$.
The conditional variance of $h\mathring{\tau}_{\mathtt{FRD}}^\prime$ can be similarly obtained as provided below:
\begin{align*}
    \mathrm{V}_{\mathtt{FRD}}^\prime \coloneqq 
    \frac{1}{\tau_T^2}\mathrm{V}_{\mathtt{SRD}}^\prime(\Sigma_Y)
    -2\frac{\tau_Y}{\tau_T^3}\mathrm{V}_{\mathtt{SRD}}^\prime(\Sigma_{Y,T})
    +\frac{\tau_Y^2}{\tau_T^4}\mathrm{V}_{\mathtt{SRD}}^\prime(\Sigma_T).
\end{align*}
Observing that $\mathring{\tau}_{\mathtt{FRD}}^{(\nu)}$ is a (weighted) difference of the two bias-corrected sharp RD estimands, the covariance can be computed as $\C{\mathring{\tau}_{\mathtt{FRD}}, h\mathring{\tau}_{\mathtt{FRD}}^\prime | X_1,\ldots,X_n} = \mathrm{C}_{\mathtt{FRD}}$, where
\begin{align*}
    \mathrm{C}_{\mathtt{FRD}}\coloneqq
    \frac{1}{\tau_T^2}\mathrm{C}_{\mathtt{SRD}}(\Sigma_Y)
    - \frac{\tau_Y}{\tau_T^3}\mathrm{C}_{\mathtt{SRD}}(\Sigma_{Y,T})
    - \frac{\tau_Y}{\tau_T^3}\mathrm{C}_{\mathtt{SRD}}(\Sigma_{T,Y})
    + \frac{\tau_Y^2}{\tau_T^4}\mathrm{C}_{\mathtt{SRD}}(\Sigma_{T}),
\end{align*}
Given this result, a confidence region analogous to that in Proposition 1 can be readily constructed:
\begin{align*}
    \Omega_{\mathtt{FRD}}^{-1/2}\mathrm{diag}(1,h)\tilde{\Delta}(\tau_{\mathtt{FRD}}, \tau_{\mathtt{FRD}}^\prime)\to_d
    \mathcal{N}\left((0,0)^\top, \mathrm{diag}(1,1)
    \right),\quad
    \Omega_{\mathtt{FRD}}\coloneqq\begin{pmatrix}
        \mathrm{V}_{\mathtt{FRD}} & \mathrm{C}_{\mathtt{FRD}}\\
        \mathrm{C}_{\mathtt{FRD}} & \mathrm{V}_{\mathtt{FRD}}^\prime
    \end{pmatrix}.
\end{align*}
Instead of Assumption 3 in the main text, we can assume the local linearity of $\mathbb{E}[Y_i(1) - Y_i(0) | X_i = x, T_i(1)> T_i(0)]$ in $x\in[-\delta_1,\delta_2]$, thereby obtaining an identification result and a confidence band analogous to those in Lemma 2 and Proposition 2.

\section{Simulation}
The data-generating processes used in the simulation studies are provided in Figure \ref{fig: simulation}. Additional simulation results are summarized in Tables \ref{tab: simulation2}-\ref{tab: simulation2-m}.
\begin{figure}[H]
    \begin{center}
    \begin{subfigure}[b]{0.32\textwidth}
        \centering
        \includegraphics[width=\linewidth]{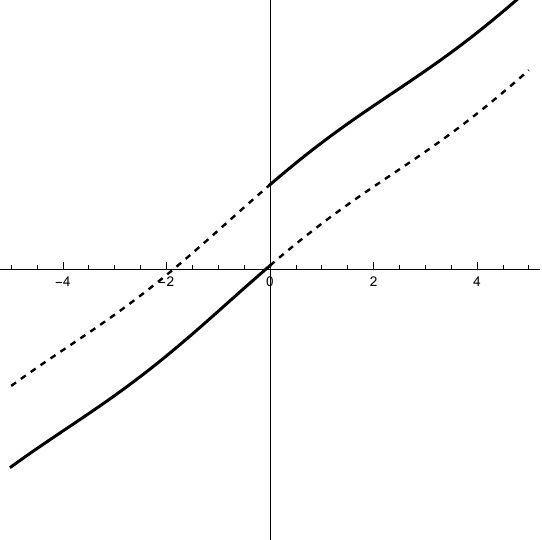}
        \caption{DGP 1}
        \label{fig: simu1a}
    \end{subfigure}
    \hfill
    \begin{subfigure}[b]{0.32\textwidth}
        \centering
        \includegraphics[width=\linewidth]{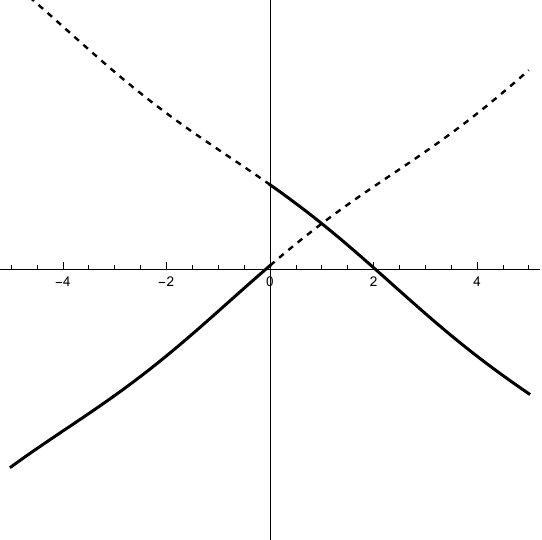}
        \caption{DGP 2}
        \label{fig: simu2a}
    \end{subfigure}
    \hfill
    \begin{subfigure}[b]{0.32\textwidth}
        \centering
        \includegraphics[width=\linewidth]{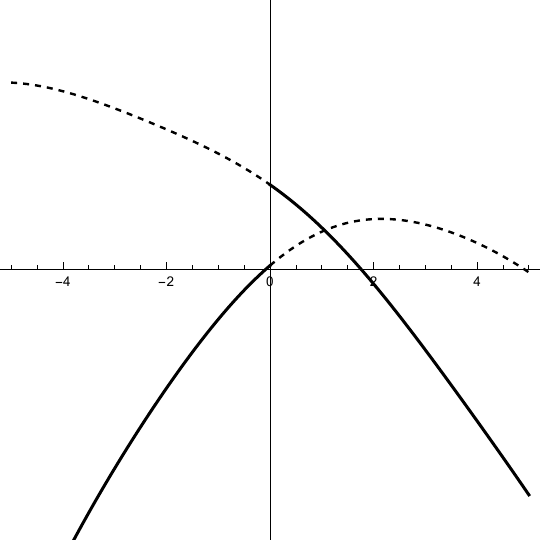}
        \caption{DGP 3}
        \label{fig: simu3a}
    \end{subfigure}

    \caption{Data Generating Processes}
    \label{fig: simulationa}
    \end{center}
\end{figure}

\begin{landscape}
\begin{table}
    \centering
    \begin{tabular}{lccccccccccc}\hline\hline
         & \multicolumn{5}{c}{Local Linear ($p=1$)} & & \multicolumn{5}{c}{Local Quadratic ($p=2$)}\\\cline{2-6}\cline{8-12}
         & $\hat{h}_{\mathtt{mean}}$ & $\hat{h}_{\mathtt{deriv}}$ & $\hat{h}_{\mathtt{both}}$ & $\hat{h}_{\mathtt{reg0}}$ & $\hat{h}_{\mathtt{reg1}}$ & & $\hat{h}_{\mathtt{mean}}$ & $\hat{h}_{\mathtt{deriv}}$ & $\hat{h}_{\mathtt{both}}$ & $\hat{h}_{\mathtt{reg0}}$ & $\hat{h}_{\mathtt{reg1}}$\\\hline
        \multicolumn{4}{l}{DGP 1: Coverage}\\
        \hspace{0.5cm}$\mathcal{R}_{0.95}$ & 0.930 & 0.928 & 0.930 & 0.930 & 0.929 & & 0.929 & 0.930 & 0.929 & 0.928 & 0.928 \\
        \hspace{0.5cm}$\mathcal{U}_{0.95}$ ($\delta=0.05$) & 0.939 & 0.941 & 0.939 & 0.938 & 0.935 & & 0.920 & 0.922 & 0.920 & 0.920 & 0.924 \\
        \hspace{0.5cm}$\mathcal{U}_{0.95}$ ($\delta=1$) & 0.931 & 0.927 & 0.927 & 0.927 & 0.929 & & 0.927 & 0.929 & 0.925 & 0.927 & 0.926 \\
        \hspace{0.5cm}$\mathcal{U}_{0.95}$ ($\delta=5$) & 0.930 & 0.928 & 0.928 & 0.930 & 0.929 & & 0.929 & 0.930 & 0.930 & 0.928 & 0.928 \\
        \multicolumn{4}{l}{DGP 1: MISE}\\
        \hspace{0.5cm}$\delta=0.05$ & 0.027 & 0.029 & 0.027 & 0.019 & 0.020 & & 0.099 & 0.108 & 0.098 & 0.032 & 0.032 \\
        \hspace{0.5cm}$\delta=1$ & 2.045 & 1.840 & 1.556 & 1.292 & 1.513 & & 7.115 & 6.112 & 4.903 & 3.409 & 3.746 \\
        \hspace{0.5cm}$\delta=5$ & 190.9 & 159.8 & 145.1 & 116.1 & 141.7 & & 653.4 & 506.2 & 473.4 & 349.8 & 391.3 \\\hline
        
        \multicolumn{4}{l}{DGP 2: Coverage}\\
        \hspace{0.5cm}$\mathcal{R}_{0.95}$ & 0.934 & 0.930 & 0.933 & 0.927 & 0.929 & & 0.934 & 0.932 & 0.933 & 0.934 & 0.935 \\
        \hspace{0.5cm}$\mathcal{U}_{0.95}$ ($\delta=0.05$) & 0.946 & 0.944 & 0.946 & 0.933 & 0.936 & & 0.930 & 0.926 & 0.929 & 0.925 & 0.929 \\
        \hspace{0.5cm}$\mathcal{U}_{0.95}$ ($\delta=1$) & 0.932 & 0.926 & 0.929 & 0.927 & 0.927 & & 0.933 & 0.931 & 0.933 & 0.933 & 0.934 \\
        \hspace{0.5cm}$\mathcal{U}_{0.95}$ ($\delta=5$) & 0.934 & 0.930 & 0.931 & 0.927 & 0.929 & & 0.934 & 0.933 & 0.933 & 0.934 & 0.935 \\
        \multicolumn{4}{l}{DGP 2: MISE}\\
        \hspace{0.5cm}$\delta=0.05$ & 0.028 & 0.026 & 0.028 & 0.020 & 0.021 & & 0.086 & 0.105 & 0.084 & 0.033 & 0.033 \\
        \hspace{0.5cm}$\delta=1$ & 2.037 & 1.690 & 1.587 & 1.291 & 1.521 & & 6.524 & 6.128 & 4.847 & 3.470 & 3.887 \\
        \hspace{0.5cm}$\delta=5$ & 188.7 & 148.1 & 144.7 & 114.4 & 141.0 & & 611.5 & 515.6 & 494.5 & 356.4 & 408.3 \\\hline
        
        \multicolumn{4}{l}{DGP 3: Coverage}\\
        \hspace{0.5cm}$\mathcal{R}_{0.95}$ & 0.933 & 0.933 & 0.933 & 0.929 & 0.930 & & 0.912 & 0.923 & 0.910 & 0.915 & 0.915 \\
        \hspace{0.5cm}$\mathcal{U}_{0.95}$ ($\delta=0.05$) & 0.940 & 0.943 & 0.940 & 0.935 & 0.931 & & 0.916 & 0.923 & 0.916 & 0.910 & 0.904 \\
        \hspace{0.5cm}$\mathcal{U}_{0.95}$ ($\delta=1$) & 0.931 & 0.931 & 0.929 & 0.927 & 0.928 & & 0.911 & 0.923 & 0.920 & 0.914 & 0.915 \\
        \hspace{0.5cm}$\mathcal{U}_{0.95}$ ($\delta=5$) & 0.910 & 0.905 & 0.904 & 0.895 & 0.912 & & 0.906 & 0.910 & 0.907 & 0.901 & 0.906 \\
        \multicolumn{4}{l}{DGP 3: MISE}\\
        \hspace{0.5cm}$\delta=0.05$ & 0.026 & 0.023 & 0.026 & 0.021 & 0.022 & & 0.090 & 0.135 & 0.089 & 0.035 & 0.035 \\
        \hspace{0.5cm}$\delta=1$ & 2.042 & 1.526 & 1.403 & 1.403 & 1.619 & & 7.639 & 7.086 & 5.422 & 3.882 & 4.340 \\
        \hspace{0.5cm}$\delta=5$ & 205.7 & 148.5 & 142.6 & 138.4 & 161.8 & & 752.0 & 574.6 & 553.3 & 415.3 & 471.8 \\\hline
    \end{tabular}
    \caption{Main Simulation Results ($n=500$)}
    \label{tab: simulation2}
\end{table}
\end{landscape}

\begin{landscape}
\begin{table}
    \centering
    \begin{tabular}{lccccccccccc}\hline\hline
         & \multicolumn{5}{c}{Local Linear ($p=1$)} & & \multicolumn{5}{c}{Local Quadratic ($p=2$)}\\\cline{2-6}\cline{8-12}
         & $\hat{h}_{\mathtt{mean}}$ & $\hat{h}_{\mathtt{deriv}}$ & $\hat{h}_{\mathtt{both}}$ & $\hat{h}_{\mathtt{reg0}}$ & $\hat{h}_{\mathtt{reg1}}$ & & $\hat{h}_{\mathtt{mean}}$ & $\hat{h}_{\mathtt{deriv}}$ & $\hat{h}_{\mathtt{both}}$ & $\hat{h}_{\mathtt{reg0}}$ & $\hat{h}_{\mathtt{reg1}}$\\\hline
        \multicolumn{4}{l}{DGP 1: MSE at $x=-\delta_1$}\\
        \hspace{0.5cm}$\delta=0.05$ & 0.138 & 0.163 & 0.137 & 0.105 & 0.112 & & 0.481 & 0.584 & 0.474 & 0.176 & 0.181 \\
        \hspace{0.5cm}$\delta=1$ & 1.170 & 1.029 & 0.942 & 0.784 & 0.964 & & 4.732 & 3.897 & 3.278 & 2.503 & 2.898 \\
        \hspace{0.5cm}$\delta=5$ & 25.618 & 21.033 & 20.342 & 16.609 & 20.619 & & 105.528 & 78.318 & 74.746 & 57.426 & 66.819 \\
        \multicolumn{4}{l}{DGP 1: MSE at $x=\delta_2$}\\
        \hspace{0.5cm}$\delta=0.05$ & 0.135 & 0.159 & 0.135 & 0.103 & 0.107 & & 0.474 & 0.556 & 0.465 & 0.171 & 0.173 \\
        \hspace{0.5cm}$\delta=1$ & 1.128 & 0.941 & 0.887 & 0.730 & 0.873 & & 4.576 & 3.336 & 3.066 & 2.391 & 2.744 \\
        \hspace{0.5cm}$\delta=5$ & 25.405 & 20.595 & 19.996 & 16.336 & 20.166 & & 104.745 & 75.513 & 73.086 & 56.862 & 66.051 \\\hline
        
        \multicolumn{4}{l}{DGP 2: MSE at $x=-\delta_1$}\\
        \hspace{0.5cm}$\delta=0.05$ & 0.132 & 0.147 & 0.132 & 0.103 & 0.109 & & 0.381 & 0.516 & 0.377 & 0.169 & 0.171 \\
        \hspace{0.5cm}$\delta=1$ & 1.116 & 0.946 & 0.878 & 0.771 & 0.924 & & 3.906 & 3.212 & 2.765 & 2.216 & 2.469 \\
        \hspace{0.5cm}$\delta=5$ & 23.901 & 19.925 & 19.159 & 16.533 & 20.175 & & 86.109 & 68.590 & 65.702 & 51.052 & 57.636 \\
        \multicolumn{4}{l}{DGP 2: MSE at $x=\delta_2$}\\
        \hspace{0.5cm}$\delta=0.05$ & 0.128 & 0.146 & 0.127 & 0.101 & 0.107 & & 0.367 & 0.519 & 0.363 & 0.167 & 0.171 \\
        \hspace{0.5cm}$\delta=1$ & 1.023 & 0.920 & 0.842 & 0.737 & 0.886 & & 3.630 & 3.266 & 2.667 & 2.171 & 2.457 \\
        \hspace{0.5cm}$\delta=5$ & 23.440 & 19.791 & 19.005 & 16.361 & 19.988 & & 84.728 & 68.857 & 65.528 & 50.826 & 57.574 \\\hline
        
        \multicolumn{4}{l}{DGP 3: MSE at $x=-\delta_1$}\\
        \hspace{0.5cm}$\delta=0.05$ & 0.142 & 0.115 & 0.139 & 0.097 & 0.106 & & 0.453 & 0.504 & 0.447 & 0.168 & 0.172 \\
        \hspace{0.5cm}$\delta=1$ & 1.232 & 0.891 & 0.816 & 0.745 & 0.924 & & 4.179 & 3.577 & 3.031 & 2.394 & 2.707 \\
        \hspace{0.5cm}$\delta=5$ & 32.871 & 24.670 & 24.193 & 22.199 & 26.828 & & 99.954 & 83.566 & 78.624 & 62.851 & 70.517 \\
        \multicolumn{4}{l}{DGP 3: MSE at $x=\delta_2$}\\
        \hspace{0.5cm}$\delta=0.05$ & 0.139 & 0.114 & 0.136 & 0.099 & 0.108 & & 0.452 & 0.506 & 0.447 & 0.171 & 0.174 \\
        \hspace{0.5cm}$\delta=1$ & 1.166 & 0.895 & 0.825 & 0.784 & 0.956 & & 4.156 & 3.614 & 2.939 & 2.445 & 2.747 \\
        \hspace{0.5cm}$\delta=5$ & 32.486 & 26.570 & 25.735 & 23.412 & 27.140 & & 99.134 & 84.356 & 78.004 & 62.250 & 69.845 \\\hline
    \end{tabular}
    \caption{Simulation Results: MSE ($n=1000$)}
    \label{tab: simulation1-m}
\end{table}
\end{landscape}

\begin{landscape}
\begin{table}
    \centering
    \begin{tabular}{lccccccccccc}\hline\hline
         & \multicolumn{5}{c}{Local Linear ($p=1$)} & & \multicolumn{5}{c}{Local Quadratic ($p=2$)}\\\cline{2-6}\cline{8-12}
         & $\hat{h}_{\mathtt{mean}}$ & $\hat{h}_{\mathtt{deriv}}$ & $\hat{h}_{\mathtt{both}}$ & $\hat{h}_{\mathtt{reg0}}$ & $\hat{h}_{\mathtt{reg1}}$ & & $\hat{h}_{\mathtt{mean}}$ & $\hat{h}_{\mathtt{deriv}}$ & $\hat{h}_{\mathtt{both}}$ & $\hat{h}_{\mathtt{reg0}}$ & $\hat{h}_{\mathtt{reg1}}$\\\hline
        \multicolumn{4}{l}{DGP 1: MSE at $x=-\delta_1$}\\
        \hspace{0.5cm}$\delta=0.05$ & 0.274 & 0.297 & 0.273 & 0.190 & 0.199 & & 0.993 & 1.080 & 0.981 & 0.326 & 0.329 \\
        \hspace{0.5cm}$\delta=1$ & 2.500 & 2.175 & 1.837 & 1.506 & 1.826 & & 8.523 & 6.847 & 5.742 & 4.434 & 4.921 \\
        \hspace{0.5cm}$\delta=5$ & 56.601 & 47.346 & 42.868 & 34.188 & 41.877 & & 193.129 & 148.859 & 139.602 & 104.086 & 116.461 \\
        \multicolumn{4}{l}{DGP 1: MSE at $x=\delta_2$}\\
        \hspace{0.5cm}$\delta=0.05$ & 0.277 & 0.297 & 0.276 & 0.195 & 0.204 & & 1.012 & 1.098 & 1.001 & 0.331 & 0.335 \\
        \hspace{0.5cm}$\delta=1$ & 2.557 & 2.175 & 1.983 & 1.613 & 1.923 & & 8.888 & 7.193 & 6.066 & 4.521 & 5.034 \\
        \hspace{0.5cm}$\delta=5$ & 56.889 & 47.347 & 43.324 & 34.725 & 42.364 & & 194.956 & 150.590 & 140.784 & 104.518 & 117.026 \\\hline
        
        \multicolumn{4}{l}{DGP 2: MSE at $x=-\delta_1$}\\
        \hspace{0.5cm}$\delta=0.05$ & 0.278 & 0.268 & 0.277 & 0.199 & 0.210 & & 0.874 & 1.070 & 0.861 & 0.334 & 0.337 \\
        \hspace{0.5cm}$\delta=1$ & 2.450 & 2.015 & 1.954 & 1.552 & 1.889 & & 8.215 & 7.343 & 5.985 & 4.588 & 5.220 \\
        \hspace{0.5cm}$\delta=5$ & 55.774 & 43.933 & 43.020 & 33.970 & 41.981 & & 182.397 & 153.794 & 147.284 & 106.407 & 122.036 \\
        \multicolumn{4}{l}{DGP 2: MSE at $x=\delta_2$}\\
        \hspace{0.5cm}$\delta=0.05$ & 0.283 & 0.267 & 0.282 & 0.199 & 0.208 & & 0.862 & 1.046 & 0.847 & 0.331 & 0.333 \\
        \hspace{0.5cm}$\delta=1$ & 2.562 & 2.003 & 1.927 & 1.539 & 1.856 & & 7.957 & 6.868 & 5.760 & 4.531 & 5.148 \\
        \hspace{0.5cm}$\delta=5$ & 56.330 & 43.872 & 42.899 & 33.905 & 41.821 & & 181.106 & 151.419 & 145.708 & 106.122 & 121.676 \\\hline
        
        \multicolumn{4}{l}{DGP 3: MSE at $x=-\delta_1$}\\
        \hspace{0.5cm}$\delta=0.05$ & 0.267 & 0.230 & 0.264 & 0.210 & 0.225 & & 0.908 & 1.360 & 0.893 & 0.355 & 0.357 \\
        \hspace{0.5cm}$\delta=1$ & 2.648 & 1.814 & 1.702 & 1.721 & 2.018 & & 9.492 & 7.885 & 6.554 & 5.190 & 5.825 \\
        \hspace{0.5cm}$\delta=5$ & 64.520 & 44.583 & 43.242 & 42.644 & 50.079 & & 226.684 & 172.470 & 166.542 & 127.670 & 144.265 \\
        \multicolumn{4}{l}{DGP 3: MSE at $x=\delta_2$}\\
        \hspace{0.5cm}$\delta=0.05$ & 0.258 & 0.231 & 0.256 & 0.207 & 0.222 & & 0.927 & 1.366 & 0.905 & 0.351 & 0.358 \\
        \hspace{0.5cm}$\delta=1$ & 2.460 & 1.868 & 1.720 & 1.680 & 1.971 & & 9.853 & 7.995 & 6.578 & 5.098 & 5.825 \\
        \hspace{0.5cm}$\delta=5$ & 62.855 & 48.600 & 46.502 & 44.561 & 51.130 & & 225.970 & 171.903 & 165.509 & 125.132 & 142.466 \\\hline
    \end{tabular}
    \caption{Simulation Results: MSE ($n=500$)}
    \label{tab: simulation2-m}
\end{table}
\end{landscape}

\bibliographystyle{apalike} 
\bibliography{refs}

\end{document}